\title{Sherali-Adams Integrality Gaps \\ Matching the Log-Density Threshold}
\author{
Eden Chlamt\'{a}\v{c}\thanks{Email: \texttt{chlamtac@cs.bgu.ac.il}.} \vspace{-0.5em}\\
Ben-Gurion University
\and
Pasin Manurangsi\thanks{Email: \texttt{pasin@berkeley.edu}. Part of this work was done while the author was visiting Ben-Gurion University.} \vspace{-0.5em}\\
UC Berkeley
}
\begin{document}

\begin{titlepage}
\def\thepage{}
\maketitle

\begin{abstract}
The log-density method is a powerful algorithmic framework which in recent years has given rise to the best-known approximations for a variety of problems, including Densest-$k$-Subgraph and Bipartite Small Set Vertex Expansion. These approximations have been conjectured to be optimal based on various instantiations of a general conjecture: that it is hard to distinguish a fully random combinatorial structure from one which contains a similar planted sub-structure with the same ``log-density''.

We bolster this conjecture by showing that in a random hypergraph with edge probability $n^{-\alpha}$, $\tilde\Omega(\log n)$ rounds of Sherali-Adams with cannot rule out the existence of a $k$-subhypergraph with edge density $k^{-\alpha-o(1)}$, for any $k$ and $\alpha$. This holds even when the bound on the objective function is lifted. This gives strong integrality gaps which exactly match the gap in the above distinguishing problems, as well as the best-known approximations, for Densest $k$-Subgraph, Smallest $p$-Edge Subgraph, their hypergraph extensions, and Small Set Bipartite Vertex Expansion (or equivalently, Minimum $p$-Union). Previously, such integrality gaps were known only for Densest $k$-Subgraph for one specific parameter setting.
\end{abstract}

\end{titlepage}\pagenumbering {arabic}



\section{Introduction}

The log-density framework~\cite{BCCFV10} is an emerging technique in approximation algorithms that proposes to understand the problems of interest via an average case study. More specifically, the first step in this framework is to consider a distinguishing problem between a ``random instance'' and a ``random instance with planted solution''. Instead of considering any algorithm, the focus here is on a simple ``witness counting'' algorithm, in which one only counts the number of occurrences of a certain substructure (i.e. witness) of the two instances. One then uses the insights from this simple witness to devise an algorithm for worst case instances. Although at first glance this last step may seem like a large leap (i.e. from simple algorithms for average case instances to more complicated algorithms for worst case instances) and even somewhat implausible, the framework has turned out to be quite effective in tackling a number of problems, including Densest $k$-Subgraph~\cite{BCCFV10}, Lowest Degree 2-Spanner, Smallest $p$-Edge Subgraph (S$p$ES)~\cite{CDK12}, Small Set Bipartite Vertex Expansion (SSBVE)~\cite{CDM17}, Label Cover and 2-CSPs~\cite{CMMV17}.

To be more concrete, let us consider the Densest $k$-Subgraph (D$k$S) problem. In this problem we are given an undirected graph $G = (V, E)$ and an integer $k$, and the goal is to find subgraph of $G$ of at most $k$ vertices that induces a maximum number of edges. The work of~\cite{BCCFV10} considers the following distinguishing problem:
\begin{definition}[Random vs Planted Subgraph Problem] For certain parameters $\alpha,\beta\in(0,1),1<k<n$, given a graph $G$, decide whether it is sampled from the Erd\H os-R\'enyi distribution $\cG(n,n^{-\alpha})$ or from the same distribution, but with a planted subgraph on $k$ randomly chosen vertices, sampled from $\cG(k,k^{-\beta})$.\end{definition}
They noted that simple witness counting algorithms can solve the distinguishing problem w.h.p.\ in time $n^{O(1/\epsilon)}$ whenever $\beta\leqs\alpha-\epsilon$, and fail to distinguish when $\beta\geqs\alpha$. In particular, when $k=n^{\alpha}$, the witness counting algorithm fails to distinguish between two instances, one in which the densest $k$-subgraph has at most $\tilde O(k)$ edges, and one in which it has $\Omega(k^{1+1-\alpha})=\Omega(k\cdot n^{\alpha(1-\alpha)})$ edges. They then describe an algorithm for general instances which achieves this exact tradeoff: when $k=n^{\alpha}$, their algorithm gives an $O(n^{\alpha(1-\alpha)+\epsilon})$-approximation in time $n^{O(1/\epsilon)}$. 

It has been conjectured (see for example~\cite{CDK12, CDM17}) that not only can the above approximation guarantee not be improved for worst-case instances, but even the Random vs Planted distinguishing problem cannot be solved below the same threshold:


\begin{conjecture}[Planted Dense Subgraph Conjecture] \label{conj:planted-subgraph}
For all $0<\alpha <  1$ and sufficiently small $\epsilon>0$, and for all $k\leqs\sqrt{n}$,\footnote{Note that, while for $k \geqs \sqrt{n}$, a spectral algorithm can distinguish the two cases beyond the log-density threshold, the best \emph{worst-case} approximation guarantees are still at the log-density threshold even for this regime.} no polynomial time algorithm can solve the Random vs Planted Subgraph problem for parameters $\beta\geqs\alpha+\epsilon$ with non-negligible probability.
\end{conjecture}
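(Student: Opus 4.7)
The plan is to prove the conjecture conditionally, by reducing from the Planted Clique hypothesis — currently the most natural ``base'' assumption for the average-case complexity of planted problems. Concretely, I would try to show that any polynomial-time algorithm distinguishing $\cG(n,n^{-\alpha})$ from $\cG(n,n^{-\alpha})$ with a planted $\cG(k,k^{-\beta})$ (for $\beta \geq \alpha + \epsilon$ and $k \leq \sqrt{n}$) can be turned into a polynomial-time algorithm distinguishing $\cG(N,1/2)$ from $\cG(N,1/2)$ with a planted clique of size $N^{1/2}$, for $N$ and the planted clique size chosen polynomially in $n$ and $k$.

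The first concrete step is a ``thinning-and-padding'' reduction. Given a Planted Clique instance $H$ on $N$ vertices, I would (i) pick a random injection from $[N]$ into $[n]$ and copy $H$'s edges onto the image after independently subsampling each edge with a carefully chosen survival probability $p$, (ii) add each remaining pair of $[n]$ independently with probability $n^{-\alpha}$, and (iii) prune or identify vertices to force the planted region to cover exactly $k$ vertices with internal density $k^{-\beta}$. A total-variation calculation should then show that under the null hypothesis the output is statistically indistinguishable from $\cG(n,n^{-\alpha})$, and under the planted hypothesis the clique is transformed into a planted $\cG(k,k^{-\beta})$ on a uniformly random $k$-subset, which is what the conjecture concerns.

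The main obstacle --- and, I believe, the reason this conjecture has resisted proof --- is that Planted Clique sits at a \emph{single} log-density point ($\alpha = 1/2$, $\beta = 0$), whereas the conjecture asks about a full two-parameter family $(\alpha,\beta)$ with $\beta \geq \alpha + \epsilon$. No straightforward edge-level reduction can \emph{shift} the log-density while preserving independence in the null distribution: thinning and padding preserve the marginal edge probability but not in a way that realizes arbitrary $(\alpha,\beta)$. One avenue to circumvent this is to compose with higher-arity (hypergraph) liftings in the spirit of this paper's main result, trading a planted clique for a planted random hypergraph whose projection onto a graph has the prescribed log-density; another is to iterate the reduction along a carefully chosen schedule of intermediate log-densities, hoping that each step is polynomial-time hard assuming the previous one. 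Either way, producing a log-density-shifting reduction without introducing polynomial-time-detectable artifacts in the null is the core technical difficulty, and this is where I would expect essentially all of the effort --- and the honest risk of failure --- to lie.
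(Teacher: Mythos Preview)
The statement you are attempting to prove is a \emph{conjecture}, not a theorem: the paper presents it as an open problem and does \emph{not} prove it. The paper's contribution is to give supporting evidence in the form of Sherali--Adams integrality gaps (Theorem~\ref{thm:main-general}), not a conditional or unconditional proof of the conjecture itself. So there is no ``paper's own proof'' to compare against; any proof, conditional or not, would be a genuinely new result.

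As to the substance of your plan: the central difficulty you yourself flag is, as far as is publicly known, genuinely unresolved. A reduction from Planted Clique must move the log-density parameter from the single point $(\alpha,\beta)=(1/2,0)$ to an arbitrary pair with $\beta\geqs\alpha+\epsilon$, while keeping the null distribution exactly (or total-variation-close to) $\cG(n,n^{-\alpha})$. Edge-thinning and vertex-padding are density-preserving operations in the log-density sense and cannot accomplish this shift; your two suggested workarounds (hypergraph lifting, or an iterated schedule of intermediate densities) are reasonable directions to explore, but no such log-density-shifting reduction is known, and the paper explicitly remarks that ``little progress has been made towards actually justifying'' the conjecture beyond lower bounds against restricted algorithm classes. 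In short, your proposal correctly identifies where the hard step is, but does not supply the missing idea; absent that, it is a research program rather than a proof.
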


The above conjecture demonstrates another intriguing aspect of the log-density framework: not only that the framework leads to improved algorithms, it also leads to conjectured tight hardness results. For Densest $k$-Subgraph, this should be contrasted with the fact that no NP-hardness of approximation is known even for a factor of say 1.01! Moreover, while inapproximability results for D$k$S are known under stronger complexity assumptions~\cite{Feige02,Kho06,RS10,AAMMW11,BKRW17,M17}, none of them achieves a ratio of the form $n^\delta$ for some $\delta > 0$ and hence they are still very far from giving a tight lower bound as predicted by Conjecture~\ref{conj:planted-subgraph}.

The importance of such tight lower bounds is also amplified by the known connections between Densest $k$-Subgraph and numerous other problems. In particular, there are reductions from D$k$S to many problems, for which either hardness of approximation is not known at all otherwise or the known hardness is very weak compared to known approximations (see e.g.~\cite{HJ06,HJLMRV06,CHK11,ABBG11,ACLR15,CMVZ16,CNW16,Lee17,CZ17,FH17,KKK18}). A hardness for D$k$S with a concrete inapproximability ratio, such one as established by Conjecture~\ref{conj:planted-subgraph}, would imply strong inapproximability results for these problems as well. In addition to applications in hardness of approximation, a variant of the conjecture has also been used in public-key cryptography~\cite{ABW10}.

Despite the aforementioned applications and importance of the conjecture, little progress has been made towards actually justifying it. This is somewhat unsurprising, since, barring few exceptions, it is typically hard to argue for validity of average case assumptions. A common approach to support these hypotheses is by proving lower bounds against restricted classes of algorithms, such as certain LP and SDP relaxations; this has been done, for examples, for the planted clique hypothesis~\cite{DM15,MPW15,HKPRS16,BHKKMP16} and the random 3-SAT hypothesis~\cite{Gri01,Sch08}.

Unfortunately, even on this front, not much is known for Densest $k$-Subgraph. In particular, the only (non-trivial) matching lower bound shown so far is that of Bhaskara et al.~\cite{BCVGZ12}, who showed that $\Omega(\log n/\log\log n)$ rounds of the Sherali-Adams hierarchy~\cite{SA90} have a matching integrality gap only for the case of $\alpha = 1/2$ and $k=\sqrt{n}$, which happens to be the parameter setting maximizing the log-density gap for D$k$S. No tight integrality gaps were known for other parameter settings. The situation is not better for other problems that have been studied in the log-density framework; either no lower bound of this type is known for them at all, or a lower bound is known only for one specific instantiation of parameters. Note that evidence for more specific hardness of this form is crucial in hardness-of-approximation results based on D$k$S and related problems, as an optimal \emph{reduction} may use a different parameter setting than the worst-case setting for D$k$S (e.g.~\cite{CNW16}), or a different problem such as S$p$ES (e.g.~\cite{Lee17}) or SSBVE (e.g.~\cite{CMVZ16}).



\subsection{Our Results}
We show that for every possible parameter setting, Sherali Adams requires a super-constant number of rounds to distinguish between a random hypergraph, and one which contains a subhypergraph with the same log-density:
\begin{theorem}\label{thm:main-informal}[informal; see Theorem~\ref{thm:main-general}] For every $c=c(n)=O(\log\log n)$, every $k\in[n]$, every constant $\epsilon>0$ and every $\epsilon<\alpha<c-1$, for the random Erd\H os-R\'enyi $c$-uniform hypergraph $\cG_c(n,n^{-\alpha})$, there is a $\tilde\Omega(\log n)$-round Sherali-Adams solution for the existence of a subhypergraph of size at most $k$, and at least $k^{c-\alpha-o(1)}$ edges, where both the bound on the size and the bound on the number of edges are lifted.
\end{theorem}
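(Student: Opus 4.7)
The plan is to construct the Sherali-Adams pseudo-distribution by coupling with a planted-subhypergraph distribution and then taking posterior marginals. Introduce an auxiliary distribution $\mathcal{D}$ over triples $(G, T, H)$: sample $T$ as a uniformly random $k$-subset of $[n]$, sample $H$ as a random $c$-uniform hypergraph on $T$ with edge probability $k^{-\alpha}$, and take $G$ to be the union of $H$ with an independent draw from $\cG_c(n, n^{-\alpha})$. Given the observed hypergraph $G$, for every set $S$ of SA variables with $|S| \le r = \tilde\Omega(\log n)$, define $\mu_S$ as the conditional law under $\mathcal{D}$ of the variables in $S$, with each vertex variable $x_v$ interpreted as $\mathbf{1}[v \in T]$ and each edge variable $y_e$ as $\mathbf{1}[e \in H]$.

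Consistency of the family $\{\mu_S\}$ is then automatic because every $\mu_S$ is a coordinate projection of one global posterior distribution on $(T,H)$ given $G$. The local LP inequalities hold pointwise on the support: if $e \in H$ then $e \subseteq T$, so $y_e \le x_v$ for every $v \in e$; and $|T| = k$ forces $\sum_v x_v \le k$ with probability one, so the lifted size bound holds under every $\mu_S$. The pseudo-expected objective is $\mathbb{E}_{\mathcal{D}}[\sum_{e \in E(G)} \mathbf{1}[e \in H] \mid G]$, and unconditionally (averaged over $G \sim \mathcal{D}$) this equals $\binom{k}{c} k^{-\alpha} = k^{c-\alpha-o(1)}$.

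What upgrades this into a statement about an actual random $G \sim \cG_c(n,n^{-\alpha})$ is a local indistinguishability estimate: with probability $1 - o(1)$ over $G \sim \cG_c(n,n^{-\alpha})$, the posterior on $(T,H)$ given $G$ is well-defined and, when restricted to any $r$-variable window $S$, statistically close to the prior law of $(T,H)$ alone. Equivalently, one shows that $G$ contains no \emph{witness subhypergraph} on at most $r$ vertices whose edge density exceeds $k^{-\alpha(1-o(1))}$. A first-moment estimate bounds the expected number of $t$-vertex, $m$-edge subhypergraphs in $\cG_c(n,n^{-\alpha})$ by $n^t \binom{\binom{t}{c}}{m} n^{-\alpha m}$, which is small enough to survive a union bound once $m \ge t \cdot k^{-\alpha(1-o(1))}$ and $t \le r = \tilde O(\log n)$.

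The main obstacle is the union bound itself: the number of subhypergraph topologies on $r$ vertices is doubly exponential in $r$, so pushing $r$ all the way to $\tilde\Omega(\log n)$ requires carefully controlling every ``unbalanced'' witness shape at every scale $t \le r$, not just the densest one. Unlike the $c=2$, $\alpha = 1/2$ regime treated in \cite{BCVGZ12}, here the log-density calculation must be carried out uniformly over all $(c,\alpha,k)$, which means the critical density threshold has to be shown to track $k^{-\alpha}$ in a parameter-free way. Once witness-freeness is in place and the posterior is tightly controlled, the lifted objective and size bounds follow directly from the planted construction, and the Sherali-Adams solution is complete.
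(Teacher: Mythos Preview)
Your construction has a fundamental obstruction that cannot be repaired by any amount of ``local indistinguishability.'' Under your planted law $\mathcal{D}$ you always have $H \subseteq G$ (since $G$ is the union of $H$ and the background graph). Consequently, for \emph{every} fixed $G$, the posterior on $(T,H)$ given $G$ is a genuine probability distribution over pairs with $H$ a subhypergraph of $G$ on $k$ vertices. In particular your pseudo-expected objective satisfies
\[
\mathbb{E}_{\mathcal{D}}\Bigl[\,\sum_{e\in E(G)}\mathbf{1}[e\in H]\;\Big|\;G\,\Bigr]\;=\;\mathbb{E}_{\mathcal{D}}\bigl[\,|E(H)|\;\big|\;G\,\bigr]\;\leqs\;\max_{|T|=k}\bigl|E(G|_T)\bigr|.
\]
But the whole point of the integrality gap is that when $G\sim\cG_c(n,n^{-\alpha})$, the right-hand side is w.h.p.\ only $\tilde O(\max\{k,k^cn^{-\alpha}\})$, which is polynomially smaller than $k^{c-\alpha}$. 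So your lifted density constraint $\sum_e y_e\geqs k^{c-\alpha-o(1)}$ simply fails. More generally, any construction that is literally a conditional law of integral solutions lies in the integral hull and can never certify a value above the true optimum; the unconditional calculation $\mathbb{E}_{\mathcal{D}}[|E(H)|]=\binom{k}{c}k^{-\alpha}$ is averaging over \emph{planted} $G$'s, for which the optimum genuinely is $k^{c-\alpha}$, and tells you nothing about random $G$'s. Local indistinguishability of $r$-variable marginals cannot rescue this, because the deficiency is a pointwise (hence global) bound on the support of the posterior, not a statement about low-order moments.

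The paper's approach is essentially the opposite of a posterior. It writes down an explicit assignment
\[
y_S \;=\; \frac{1}{L^{|S|}}\max_{\substack{S'\supseteq S\\ |S'|\leqs\tau}} n^{(1-\beta)(\tilde\alpha|E(S')|-|S'|)},
\]
which is only a \emph{pseudo}-distribution: it does not arise from any measure on integral solutions and is allowed to exceed the integral optimum. The work is then to verify the lifted inequalities directly. The density constraint follows once one shows the optimal $S'$ never saturates the size cutoff~$\tau$ (Lemma~\ref{lem:opt-set-size-gen}); the size constraint is the hard part, requiring supermodularity of $\Phi$, a structural analysis of the maximal optimizers $S'_{\max}(S)$, and finally a Kim--Vu polynomial concentration bound to control $\sum_i y_{S\cup\{i\}}$ simultaneously for all small $S$. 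The $\tilde\alpha$-strictly-balanced condition that emerges from optimality of $S_{\max}$ is precisely what makes Kim--Vu applicable --- this is the step that replaces your informal witness-freeness/union-bound sketch, and it is where the actual log-density threshold enters. (As an aside, your stated witness condition ``$m\geqs t\cdot k^{-\alpha(1-o(1))}$'' is dimensionally off; the correct balancedness condition is $|E(S)|\leqs |S|/\alpha$ up to lower-order terms, cf.\ Lemma~\ref{lem:log-density-witness-gen}.)
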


As immediate corollaries of our main theorem, we obtain integrality gaps matching the log-density threshold for a variety of problems, which we outline here. Firstly, for D$k$S, we have an integrality gap matching the approximation guarantee of~\cite{BCCFV10} for every log-density:
\begin{corollary} For any constant $\alpha\in(0,1)$, for $G \sim \cG(n,n^{-\alpha})$ and $k=n^{\alpha}$, $\tilde\Omega(\log n)$ rounds of Sherali-Adams applied to a Densest $k$-Subgraph relaxation with a lifted objective function have an integrality gap of $n^{\alpha(1-\alpha)-o(1)}$ with high probability.
\end{corollary}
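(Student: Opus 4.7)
The plan is to invoke Theorem~\ref{thm:main-general} with uniformity $c = 2$ and planted size $k = n^\alpha$, and to separately upper bound the true D$k$S optimum on $G \sim \cG(n, n^{-\alpha})$. The ratio between the two yields the claimed integrality gap.

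For the lower bound on the Sherali-Adams value, specializing the main theorem with $c = 2$ and $k = n^\alpha$ produces, with high probability over the draw of $G$, a feasible $\tilde\Omega(\log n)$-round Sherali-Adams solution whose lifted objective certifies a subgraph of at most $k$ vertices containing at least $k^{c - \alpha - o(1)} = n^{\alpha(2-\alpha) - o(1)}$ edges. Because the theorem lifts both the size constraint and the objective, its SA solution remains feasible for any weaker D$k$S relaxation that contains only these constraints; in particular, the SA value of the standard D$k$S relaxation with a lifted objective is at least $n^{\alpha(2-\alpha) - o(1)}$.

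For the upper bound on the true optimum, I would show that the densest $k$-subgraph of $G$ contains only $\tilde O(k)$ edges w.h.p. For any fixed $k$-subset $S$, the number of edges $X_S$ inside $S$ is Binomial with mean $\binom{k}{2} n^{-\alpha} = \Theta(k)$, since $kp = n^\alpha \cdot n^{-\alpha} = 1$. A Chernoff bound then gives $\Pr[X_S \geq Ck\log n] \leq \exp(-\Omega(k \log n \log\log n))$ for $C$ large enough, which survives a union bound over the $\binom{n}{k} \leq n^k$ choices of $S$. Hence the true optimum is at most $O(k \log n) = \tilde O(n^\alpha)$ w.h.p., and dividing yields an integrality gap of $n^{\alpha(2-\alpha) - o(1)} / \tilde O(n^\alpha) = n^{\alpha(1-\alpha) - o(1)}$, matching the approximation guarantee of~\cite{BCCFV10}. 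The substantive work is entirely in Theorem~\ref{thm:main-general}; the true-optimum bound is a routine probabilistic calculation, and the only delicate point — that the SA solution provided by the main theorem is indeed feasible for a standard D$k$S relaxation — is handled precisely because the theorem's statement already lifts both the size constraint and the objective.
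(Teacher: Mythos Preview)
Your proposal is correct and follows essentially the same approach as the paper: apply Theorem~\ref{thm:main-general} with $c=2$ and $\beta=\alpha$ to get an SA value of $n^{\alpha(2-\alpha)-o(1)}$, then compare against the true optimum of $\tilde O(k)$ in $\cG(n,n^{-\alpha})$. The paper's proof (given for the more general D$k$SH corollary) simply asserts the $\tilde O(\max\{k,k^c n^{-\alpha}\})$ bound on the true optimum without spelling out the Chernoff-plus-union-bound argument you provide, but otherwise the two arguments are identical.
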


In the Smallest $p$-Edge Subgraph problem (S$p$ES), we are given a graph $G$ and a parameter $p$, and are asked to find a subgraph of $G$ with $p$ edges and a minimum number of vertices. For this problem we get integrality gaps matching the approximation guarantee of~\cite{CDK12}:
\begin{corollary} For any constant $\alpha\in(0,1)$, for $G \sim \cG(n,n^{-\alpha})$ and $p=n^{\alpha}$, $\tilde\Omega(\log n)$ rounds of Sherali-Adams applied to a S$p$ES relaxation with a lifted objective function have an integrality gap of $n^{\frac{\alpha(1-\alpha)}{2-\alpha}-o(1)}$ w.h.p. In particular, for $\alpha=2 - \sqrt{2}$, we get an integrality gap of $n^{3-2\sqrt{2}-o(1)}$.
\end{corollary}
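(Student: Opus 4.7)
The plan is to deduce the corollary directly from Theorem~\ref{thm:main-informal} specialized to $c=2$, by observing that the SA pseudo-solution it produces is simultaneously feasible for the natural S$p$ES relaxation, and by lower-bounding the integer optimum via concentration.

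For the integer lower bound, I would show that w.h.p.\ over $G\sim\mathcal{G}(n,n^{-\alpha})$, every vertex set $S$ with $|S|\leq n^{\alpha-\epsilon}$ induces strictly fewer than $p$ edges, so that the true S$p$ES optimum is at least $n^{\alpha-o(1)}$. Indeed, the expected number of edges spanned by a fixed $k'$-subset is $\binom{k'}{2}n^{-\alpha}\leq n^{\alpha-2\epsilon}$, which is $o(p)$ for $p=n^\alpha$; a Chernoff tail bound combined with a union bound over the $\binom{n}{k'}$ subsets of size $k'\leq n^{\alpha-\epsilon}$ finishes the argument once $\epsilon=\epsilon(n)\to 0$ slowly enough.

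For the SA upper bound, set $k^\star := n^{\alpha/(2-\alpha)+\eta}$ for a small $\eta>0$. Applying Theorem~\ref{thm:main-informal} with $c=2$ and parameter $k=k^\star$ furnishes a $\tilde\Omega(\log n)$-round SA pseudo-solution in which (in the lifted sense) $\sum_v x_v \leq k^\star$ while the total weight on the edges of $G$ is at least $(k^\star)^{2-\alpha-o(1)}\geq n^\alpha = p$. Since the S$p$ES relaxation is built on the same vertex and edge variables, with consistency constraints $y_{uv}\leq x_u$, $y_{uv}\leq x_v$ (already present in the SA polytope), edge demand $\sum_{uv\in E(G)} y_{uv}\geq p$, and vertex-minimization objective, this pseudo-solution is feasible for S$p$ES with objective value at most $k^\star$. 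Dividing the two bounds gives an integrality gap of $n^{\alpha-\alpha/(2-\alpha)-o(1)} = n^{\alpha(1-\alpha)/(2-\alpha)-o(1)}$; substituting $\alpha = 2-\sqrt{2}$ and simplifying $(2-\sqrt{2})(\sqrt{2}-1)/\sqrt{2}$ yields the exponent $3-2\sqrt{2}$.

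The main obstacle I anticipate is that D$k$S and S$p$ES swap the roles of objective and constraint: one \emph{maximizes} edges subject to a vertex budget, while the other \emph{minimizes} vertices subject to an edge demand. The feature making the reduction go through is exactly the clause in Theorem~\ref{thm:main-informal} that both the size and edge-count bounds are lifted; I would need to check carefully that this lifted form of the edge-count inequality coincides with the hardened $\sum_{uv\in E(G)} y_{uv}\geq p$ constraint of the S$p$ES relaxation, so that a single pseudo-distribution simultaneously witnesses both the small vertex objective and the large edge count. Once this compatibility is verified, the rest is the arithmetic above.
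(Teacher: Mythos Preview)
Your proposal is correct and follows essentially the same route as the paper. The paper treats S$p$ES as the $c=2$ case of Minimum $p$-Union and proves the gap by choosing $\beta=\alpha/(2-\alpha)$ in Theorem~\ref{thm:main-general}, obtaining an SA solution with $k=n^{\beta+o(1)}$ and $p=n^{\beta(2-\alpha)-o(1)}=n^{\alpha-o(1)}$, then comparing with the integer optimum $\tilde\Omega(\min\{p,(pn^{\alpha})^{1/2}\})=\tilde\Omega(n^{\alpha})$; your choice $k^\star=n^{\alpha/(2-\alpha)+\eta}$ is exactly this (with the slack absorbed into $o(1)$), and your anticipated ``obstacle'' is precisely the observation the paper makes explicit---that the lifted LPs for D$k$SH and M$p$U share the same constraint set and differ only in which of $k,p$ is the objective, so one SA solution serves both.
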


We remark that the algorithms in~\cite{BCCFV10,CDK12} can be stated in terms of the Sherali-Adams relaxations used in our work. Hence, we give an essentially tight (up to a sub-polynomial factor) integrality gaps for these relaxations of D$k$S and S$p$ES for every setting of parameters.

Since our main theorem works not only for random graphs but also for random hypergraphs, our integrality gaps extend to the hypergraph variants of D$k$S and S$p$ES. The hypergraph extension of D$k$S, called Densest $k$-Subhypergraph (D$k$SH), is to find, given a $c$-uniform hypergraph $G$ and an integer $k$, a subhypergraph of $k$ vertices that contains as many hyperedges as possible. Our integrality gap for Densest $k$-Subhypergraph is stated below.

\begin{corollary} \label{col:gap-dksh}
For any constants $c \geqs 2$ and $\alpha \in (0, c - 1)$, for $G\sim\cG_c(n,n^{-\alpha})$ and $k=n^{\alpha/(c - 1)}$, $\tilde\Omega(\log n)$ rounds of Sherali-Adams applied to a D$k$SH relaxation with a lifted objective function have an integrality gap of $n^{\alpha(1-\alpha/(c - 1))-o(1)}$ w.h.p. In particular, for $\alpha=(c - 1)/2$, the integrality gap is $n^{(c - 1)/4 - o(1)}$.
\end{corollary}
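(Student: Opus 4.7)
My plan is to derive Corollary~\ref{col:gap-dksh} by combining Theorem~\ref{thm:main-informal} with a standard upper bound on the true densest $k$-subhypergraph in $\mathcal{G}_c(n, n^{-\alpha})$. First I would instantiate Theorem~\ref{thm:main-informal} with the same random hypergraph and size parameter $k = n^{\alpha/(c-1)}$ (the hypothesis $\alpha < c-1$ of the main theorem is satisfied since $c$ is a constant and $\alpha\in(0,c-1)$ is a constant bounded away from $0$). This produces a $\tilde\Omega(\log n)$-round Sherali-Adams solution of the lifted D$k$SH relaxation whose objective value is at least $k^{c-\alpha-o(1)}$; this serves as the fractional (LP) value.

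Next, to upper-bound the integral optimum, I would observe that for any fixed $S\subseteq V$ with $|S|=k$ the number of induced hyperedges is distributed as $\mathrm{Bin}\bigl(\binom{k}{c}, n^{-\alpha}\bigr)$, with expectation $\binom{k}{c}n^{-\alpha}=\Theta(k)$ under our parameter choice (since $k^{c} n^{-\alpha}=n^{c\alpha/(c-1)-\alpha}=n^{\alpha/(c-1)}=k$). A Chernoff bound followed by a union bound over the $\binom{n}{k}\leq n^k$ choices of $S$ then shows that, with high probability, every $k$-subset spans at most $O(k\log n)=k^{1+o(1)}$ hyperedges. Dividing the two bounds, the integrality gap is at least $k^{\,c-1-\alpha-o(1)}$, and substituting $k = n^{\alpha/(c-1)}$ yields the advertised $n^{\alpha(1-\alpha/(c-1))-o(1)}$, with the special case $\alpha=(c-1)/2$ giving $n^{(c-1)/4-o(1)}$.

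Essentially there is no real obstacle here: Theorem~\ref{thm:main-informal} does all the heavy lifting, and the concentration/union-bound step is routine. The only delicate point is verifying that the $\mathrm{polylog}(n)$ loss from the union bound can safely be absorbed into the $k^{o(1)}$ slack, which holds because $k = n^{\alpha/(c-1)}$ is polynomial in $n$ whenever $\alpha$ is a positive constant. A minor formality is confirming that the Sherali-Adams solution produced by the main theorem is valid for whichever specific D$k$SH LP relaxation one chooses to lift (most naturally, vertex variables $x_v$, hyperedge variables $y_e$ with covering constraints $y_e \leq x_v$ for $v\in e$, and size constraint $\sum_v x_v \leq k$); this should follow immediately from the formulation of the main theorem.
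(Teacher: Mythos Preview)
Your proposal is correct and matches the paper's own proof essentially line for line: both set $\beta=\alpha/(c-1)$, invoke the main theorem to obtain a $\tilde\Omega(\log n)$-round Sherali--Adams solution with objective $k^{c-\alpha-o(1)}$, and compare against the true optimum $\tilde O(k)$ to get the gap $k^{c-1-\alpha-o(1)}=n^{\alpha(1-\alpha/(c-1))-o(1)}$. The only difference is cosmetic: the paper simply asserts the integral upper bound as ``easy to see'' (writing it as $\tilde O(\max\{k,k^cn^{-\alpha}\})$), whereas you spell out the Chernoff-plus-union-bound argument explicitly.
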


In the Minimum $p$-Union (M$p$U) problem, the goal is to find $p$ hyperedges whose union is as small as possible in a given $c$-uniform hypergraph. Our integrality gap for M$p$U is stated below.
\begin{corollary} \label{col:gap-mpu}
For any constants $c \geqs 2$ and $\alpha \in (0, c - 1)$, for $G\sim\cG_c(n,n^{-\alpha})$ and $p = n^{\alpha/(c - 1)}$, $\tilde\Omega(\log n)$ rounds of Sherali-Adams applied to a M$p$U relaxation with a lifted objective function have an integrality gap of $n^{\frac{\alpha(c - 1 - \alpha)}{(c - 1)(c - \alpha)}-o(1)}$ w.h.p. In particular, for $\alpha= c - \sqrt{c}$, we get an integrality gap of $n^{1 - 2/(1 + \sqrt{c}) - o(1)}$.
\end{corollary}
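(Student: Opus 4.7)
The plan is to combine Theorem~\ref{thm:main-informal} with a first-moment bound on the integer optimum of M$p$U. The key structural observation is that the natural LP relaxations of D$k$SH and M$p$U share the same variable set $\{x_v\}_{v \in V}$, $\{y_e\}_{e \in E}$ and the same incidence constraints $y_e \leq x_v$ for $v \in e$; they differ only in whether $\sum_v x_v \leq k$ is treated as a hard constraint with $\sum_e y_e$ being maximized, or vice versa. Since Theorem~\ref{thm:main-informal} explicitly lifts both bounds to the Sherali-Adams level, the pseudo-distribution it produces is directly a feasible SA solution for the M$p$U relaxation with lifted objective. I would choose $k = n^{\alpha/((c-1)(c-\alpha)) + o(1)}$ large enough that $k^{c - \alpha - o(1)} \geq p = n^{\alpha/(c-1)}$; applying the theorem with this $k$ then gives an SA pseudo-solution whose lifted M$p$U objective $\sum_v x_v$ is at most $k$.

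For soundness, I would show that w.h.p.\ every set of $p$ hyperedges of $G \sim \cG_c(n, n^{-\alpha})$ spans at least $n^{\alpha/(c-1) - o(1)}$ vertices. Fix a constant $\epsilon > 0$ and set $k' = n^{\alpha/(c-1) - \epsilon}$. For any fixed vertex set $S$ of size $k'$, the count $|E(G[S])|$ is binomial with mean $\binom{k'}{c} n^{-\alpha} \leq p \cdot n^{-c\epsilon}$, so a factorial-moment bound gives
\[
\Pr[|E(G[S])| \geq p] \;\leq\; \binom{\binom{k'}{c}}{p} n^{-\alpha p} \;\leq\; \left(\frac{e\,(k')^c n^{-\alpha}}{p\,c!}\right)^{p} \;\leq\; (e/c!)^p \cdot n^{-c\epsilon p}.
\]
Union-bounding over at most $n^{k'}$ choices of $S$ yields an expected count of at most $n^{k' - c\epsilon p + o(p)}$, which tends to $0$ since $p = k' \cdot n^\epsilon$ dominates $k'/\epsilon$. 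Letting $\epsilon$ slowly go to $0$ gives integer OPT $\geq n^{\alpha/(c-1) - o(1)}$ with high probability.

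Dividing the two bounds yields an integrality gap of $n^{\alpha/(c-1) - \alpha/((c-1)(c-\alpha)) - o(1)} = n^{\alpha(c-1-\alpha)/((c-1)(c-\alpha)) - o(1)}$, as claimed. For the special case $\alpha = c - \sqrt{c}$, the exponent simplifies via $c - \sqrt c = \sqrt c(\sqrt c - 1)$ and $c - 1 = (\sqrt c - 1)(\sqrt c + 1)$ to $(\sqrt c - 1)/(\sqrt c + 1) = 1 - 2/(1 + \sqrt c)$.

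The main obstacle I anticipate is the careful bookkeeping of several $o(1)$ terms: one must simultaneously let $\epsilon \to 0$ in the soundness bound, absorb the $o(1)$ slack in the exponent $k^{c - \alpha - o(1)}$ coming from Theorem~\ref{thm:main-informal}, and verify that the round count remains $\tilde\Omega(\log n)$ throughout the parameter tuning (and also that the chosen $k$ lies in the range where the main theorem is applicable). Beyond this bookkeeping, the completeness step is a syntactic reinterpretation of the SA solution already guaranteed by the main theorem, and the soundness step is a routine first-moment union bound, so I do not anticipate substantial further difficulties.
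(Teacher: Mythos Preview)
Your proposal is correct and matches the paper's approach essentially line for line: the paper also observes that the lifted M$p$U relaxation coincides with the lifted D$k$SH relaxation, plugs $\beta=\alpha/((c-1)(c-\alpha))$ into Theorem~\ref{thm:main-general} for completeness, and for soundness simply asserts that w.h.p.\ any $p$ hyperedges span at least $\tilde\Omega(\min\{p,(pn^{\alpha})^{1/c}\})=n^{\alpha/(c-1)-o(1)}$ vertices. Your first-moment union bound is just an explicit write-out of that last assertion, and your anticipated $o(1)$ bookkeeping (bumping $k$ up slightly so that $k^{c-\alpha-o(1)}\geqs p$) is exactly the slack the paper sweeps under the rug.
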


We remark that our integrality gaps for both D$k$SH and M$p$U match the log-density threshold. However, unlike their graph counterparts D$k$S and S$p$ES, no approximation algorithm whose ratios match the ones predicted by the log-density framework is known for D$k$SH and M$p$U for $c$-uniform hypergraphs where $c \geqs 4$. For $c = 3$, such algorithms are known only for the semi-random instances~\cite{CMinprep} but not for the general (worst case) instances.


The extension to super-constant edge size also has implications for the recently studied Small Set Bipartite Vertex Expansion (SSBVE) problem. In this problem, we are given a bipartite graph $G=(L,R,E)$ and a parameter $p$, and are asked to find $U\subseteq L$ of size $p$ with a minimum number of total neighbors in $R$. We also match the approximation guarantee of~\cite{CDM17} for SSBVE:
\begin{corollary} \label{col:gap-ssbve}
For any constant $\alpha\in(0,1)$, there is an infinite family of bipartite graphs, such that for any bipartite graph $G=(L,R,E)$ in this family, and $p=|L|^\alpha$, $\omega(1)$ rounds of Sherali-Adams applied to an SSBVE relaxation with a lifted objective function have an integrality gap of $|L|^{\alpha(1-\alpha) - o(1)}$.
\end{corollary}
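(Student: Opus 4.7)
The plan is to instantiate Theorem~\ref{thm:main-general} on a random $c$-uniform hypergraph with $c$ a sufficiently large constant depending on $\alpha$, and exploit the equivalence between SSBVE on bipartite graphs and Minimum $p$-Union on hypergraphs. Given a $c$-uniform hypergraph $G = (V, E)$, form the incidence bipartite graph $H = (L, R, F)$ with $L = E$, $R = V$, and $(e, v) \in F$ iff $v \in e$; then SSBVE on $H$ at parameter $p$ coincides with M$p$U on $G$. A Sherali-Adams pseudo-solution from Theorem~\ref{thm:main-general}, indexed by monomials in the hypergraph vertex indicators $x_v$, is converted into an SSBVE pseudo-solution by defining the hyperedge indicators $y_e := \prod_{v \in e} x_v$ and the vertex indicators $z_v := x_v$; pseudo-moment inequalities give $z_v \geq y_e$ for $v \in e$, and $\tilde{\mathbb{E}}[\sum_e y_e] \geq p$ follows from the theorem's edge-count bound (rescaling $y_e$ by some $\gamma \leq 1$ then enforces $|U| = p$). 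Since each $y_e$ is a degree-$c$ monomial, an $r$-round hypergraph solution yields a $\lfloor r/c \rfloor$-round SSBVE solution; with $r = \tilde\Omega(\log n)$ and $c$ constant, we retain $\tilde\Omega(\log n) = \omega(1)$ rounds.

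Fix $\alpha \in (0, 1)$, choose $c$ to be any constant larger than $1/(1-\alpha)$, and set $\alpha_h = c^2 \alpha/(1 + c\alpha)$; a direct calculation shows $\alpha_h < c - 1$, so Theorem~\ref{thm:main-general} applies. Taking $G \sim \cG_c(n, n^{-\alpha_h})$, concentration of the hyperedge count gives $|L| = \Theta(n^{c - \alpha_h}) = \Theta(n^{c/(1 + c\alpha)})$ w.h.p. Applying Theorem~\ref{thm:main-general} with vertex parameter $k = n^\alpha$ (inflated by a sub-polynomial factor to absorb the $o(1)$ in the theorem's edge-count exponent), we obtain a Sherali-Adams solution witnessing a subhypergraph with at most $n^{\alpha + o(1)}$ vertices containing at least $|L|^\alpha = p$ hyperedges, which translates into an SSBVE pseudo-solution of lifted objective $\sum_v z_v \leq n^{\alpha + o(1)}$.

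For the integral minimum, a standard Chernoff plus union-bound argument establishes that w.h.p., every vertex subset $S \subseteq V$ of size $s$ contains at most $O(\binom{s}{c} n^{-\alpha_h})$ hyperedges of $G$, uniformly over all $s$ in the relevant polynomial range. Consequently, any union of $p$ hyperedges in $G$ must span at least $s^* = \Omega(n^{[\alpha_h + \alpha(c - \alpha_h)]/c}) = \Omega(n^{\alpha + \alpha_h(1-\alpha)/c})$ vertices, so the integrality gap is at least $s^*/n^{\alpha + o(1)} = \Omega(n^{\alpha_h(1-\alpha)/c - o(1)}) = \Omega(n^{c\alpha(1-\alpha)/(1 + c\alpha) - o(1)})$. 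Since $|L| = \Theta(n^{c/(1+c\alpha)})$, the exponent in $|L|$ is exactly $\alpha(1-\alpha)$, giving gap $|L|^{\alpha(1-\alpha) - o(1)}$ as claimed. The main obstacle and source of tension is the parameter tuning: $c$ must exceed $1/(1-\alpha)$ so that $\alpha_h < c - 1$ and Theorem~\ref{thm:main-general} applies, which drives $c$ to grow with $1/(1-\alpha)$; if one wished a statement uniform over all $\alpha \in (0,1)$ simultaneously, a super-constant $c = c(n) = \omega(1)$ would be required, but for any fixed $\alpha$ a constant $c = c(\alpha)$ suffices and preserves $\tilde\Omega(\log n)$ rounds.
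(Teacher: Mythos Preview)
Your reduction from the hypergraph Sherali--Adams solution to an SSBVE solution is fine, and the overall strategy (M$p$U on $\cG_c(n,n^{-\alpha_h})$ via the incidence bipartite graph) is the same as the paper's. The error is in your lower bound on the integral optimum.

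You assert that a ``standard Chernoff plus union-bound argument'' shows that every $s$-vertex subset contains at most $O\bigl(\binom{s}{c}n^{-\alpha_h}\bigr)$ hyperedges, and hence that $p$ hyperedges require $s^* = \Omega\bigl((pn^{\alpha_h})^{1/c}\bigr)$ vertices. This is false in the regime you are in. The Chernoff/union bound yields $|E(S)|\le O\bigl(\binom{s}{c}n^{-\alpha_h}\bigr)$ only when this quantity dominates the union-bound cost $s\log n$; otherwise one only gets $|E(S)|\le \tilde O(s)$. With your parameters one checks $s^* = n^{\alpha(c+1)/(1+c\alpha)} < n^{\alpha_h/(c-1)}$ (since $(c+1)(c-1)=c^2-1<c^2$), so $\binom{s^*}{c}n^{-\alpha_h}\approx p \ll s^*$ and the concentration bound does \emph{not} give what you claim. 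In fact the true integral optimum is $\Theta(p)$, not $s^*$: since $\alpha_h<c-1$ the random hypergraph has a giant component, and any connected sub\-hypergraph with $p$ hyperedges has at most $1+p(c-1)=O(p)$ vertices. Plugging this in, your gap is only
\[
\frac{\Theta(p)}{n^{\alpha+o(1)}} \;=\; n^{\frac{\alpha(c-1-c\alpha)}{1+c\alpha}-o(1)} \;=\; |L|^{\,\alpha(1-\alpha)\,-\,\alpha/c\,-\,o(1)},
\]
which for constant $c$ is polynomially smaller than the claimed $|L|^{\alpha(1-\alpha)-o(1)}$.

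This is exactly why the paper takes $c=\log\log n$ rather than a constant: the parameters are tuned so that $p = n^{\alpha_h/(c-1)} = (pn^{\alpha_h})^{1/c}$ (the two lower bounds on the integral optimum coincide), and the resulting deficit $\alpha/c$ in the exponent becomes $o(1)$. Your closing remark that ``for any fixed $\alpha$ a constant $c=c(\alpha)$ suffices'' is therefore incorrect; the growing $c$ is essential, not merely a convenience for uniformity in $\alpha$.
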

Note that~\cite{CDM17} also gave a Sherali-Adams integrality gap for the above problem, though only for $\alpha=1/2$ (as in the integrality gap of~\cite{BCVGZ12}), and without lifting the objective function.

\subsection{Related Work}
As mentioned earlier, Sherali-Adams gaps for Densest $k$-Subgraph were studied in~\cite{BCVGZ12}. In their paper, they focused exclusively on the case of $\cG(n,n^{-\alpha})$ and $k=n^{\alpha}$ for $\alpha=1/2$. They defined an LP solution which, for every small vertex set  $S\subseteq V$, assigns to the variable $y_S$ some value which depends only on the size of the smallest Steiner tree for $S$. As they explain, such a solution is intuitive for the specific statistical properties of $\cG(n,n^{-1/2})$. However this intuition, and their analysis, break down for any other value of $\alpha$.

We also remark that, while lower bounds for the stronger Sum-of-Square (SoS) relaxations for D$k$S have been studied~\cite{BCVGZ12,CMMV17}, the integrality gaps given do not match the gaps predicted by log-density framework; in fact, the graphs used in these results are not random graphs but rather graphs resulting from a reduction from random instances of a constraint satisfaction problem (CSP). Hence, these SoS lower bounds are unrelated to the Planted Dense Subgraph Conjecture, and it is unlikely that the approach can yield similar gaps as predicted by the conjecture.

We turn instead to more recent work on integrality gaps for lift-and-project relaxations, namely the pseudo-calibration technique. This technique was introduced by Barak et al.~\cite{BHKKMP16}, to give tight Sum-of-Squares integrality gaps for the closely related Planted Clique problem. The pseudo-calibration heuristic involves choosing a pseudo-distribution whose low-degree Fourier coefficients match those of a random graph with a planted solution. This technique is a very powerful heuristic, however, it does not automatically guarantee that the pseudo-distribution suggested by the technique will satisfy the required constraints. It is not clear at this point whether such a solution for Densest $k$-Subgraph and related problems is feasible as a Sherali-Adams solution, much less Sum-of-Squares. Our solution, however, is inspired by pseudo-calibration. We are able to modify the solution suggested by pseudo-calibration in such a way that in fact it does satisfy at least the Sherali-Adams constraints. This connection is explained in Appendix~\ref{sec:pseudo-cal}.

\subsection{Our Technique}
In our analysis, we must show that three kinds of lifted constraints hold: monotonicity ($y_{T} \leqs y_{S}$ for any $S\subseteq T\subseteq V$), size constraints (bounding the number of vertices selected by the solution), and density constraints (bounding the number of edges among selected vertices).

As explained below, for every small subset $S \subseteq V$, our solution assigns to $y_S$ the maximum value attained for some function of $G|_{S'}$ over all possible vertex sets $S'\supseteq S$ up to a certain size bound. While at a first glance, both the monotonicity constraint and the density constraint seem to follow almost immediately from the definition, it turns out that this ``immediate'' proof of the density constraint relies on two different size bounds for $S'$. To reconcile the two conditions, we show that in fact in $\cG(n,n^{-\alpha})$, w.h.p.\ any optimal set $S'$ must already be bounded in size, thus making the difference in the requirements irrelevant.

The main technical aspect of our proof involves the size constraints. As we will show, a careful analysis of the properties of optimal sets does in fact guarantee the size constraints in expectation (over the random choices in $\cG(n,n^{-\alpha})$). However, this is not enough, since we also need to show strong concentration (especially since this must hold for all possible small subsets $S\subseteq V$). Unfortunately, the general kind of concentration result we need is not true for random graphs.\footnote{As an example of this issue, consider the number of copies of $H$ in $\cG(n,n^{-3/5})$, where $H$ is a clique of size 5 connected to a path of length 5. The expected number of copies of $H$ in $G$ is $n^{|V(H)|-3/5|E(H)|}=n^{10-3/5(15)}=n$. However, w.h.p.\ \emph{no} copies of $H$ will appear, since $H$ also contains $K_5$, which only has an expected $n^{5-3/5(10)}=1/n$ copies.} We again strongly rely on the properties of optimal sets, showing that they \emph{precisely} give the condition which guarantees concentration. This is shown by bounding the left-hand-side of the LP constraint by a polynomial in the incidence vector of $\cG(n,n^{-\alpha})$, and applying a concentration bound of Kim and Vu~\cite{KV00} for low-degree polynomials.

\subsection{Organization}
In the next section, we will describe our solution for the Sherali-Adams relaxation and show that it satisfies the three aforementioned properties; this is the main contribution of our paper. Then, in Section~\ref{sec:int-gap}, we state the relaxations for D$k$SH, M$p$U and SSBVE, and explain the parameters setting that give the claimed integrality gaps. Finally, we discuss several open questions in Section~\ref{sec:open}.


\section{Our Sherali-Adams Solution and Its Properties}
\label{sec:main-gen}

The goal of this section is to prove our main theorem, which is stated below.

\begin{theorem} \label{thm:main-general}
For any constants $0 < \beta, \varepsilon < 1$, any $c = c(n) = O(\log \log n)$ and any $\varepsilon < \alpha < c(n) - \varepsilon$, let $G = (V, E)$ be a hypergraph sampled from the Erd\H{o}s-R\'enyi random $c$-uniform hypergraph distribution $\cG_c(n,n^{-\alpha})$, then, w.h.p., there exists $\{y_S\}_{|S| \leqs r}$ for some $r = \tilde\Omega(\log n)$ that satisfies the following:
\begin{align}
\forall S, T \subseteq V \text{ such that } |S| + |T| \leqs r: \nonumber \\
\sum_{i \in V} \sum_{J \subseteq T} (-1)^{|J|} y_{S \cup J \cup \{i\}} &\leqs n^{\beta + o(1)} \sum_{J \subseteq T} (-1)^{|J|}y_{S \cup J} \label{eq:const-size-gen-1} \\
\sum_{e \in E} \sum_{J \subseteq T} (-1)^{|J|} y_{S \cup J \cup e} &\geqs n^{\beta(c - \alpha) - o(1)} \sum_{J \subseteq T} (-1)^{|J|} y_{S \cup J} \label{eq:const-edge-gen-1} \\
0 &\leqs \sum_{J \subseteq T} (-1)^{|J|} y_{S \cup J} \leqs 1 \label{eq:incl-excl-gen} \\
y_{\emptyset} &= 1.
\end{align}
\end{theorem}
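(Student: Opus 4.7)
The plan is to define, for each $S\subseteq V$ with $|S|\le r$, the Sherali--Adams value $y_S$ as a normalized maximum over ``witness'' supersets $S'\supseteq S$ of a quantity $w(S',G)$ depending only on $G|_{S'}$. Following the pseudo--calibration heuristic alluded to in the introduction, $w(S',G)$ should be calibrated so that it captures the marginal contribution $S'$ would make if it were a random planted dense sub--hypergraph on roughly $k=n^{\beta}$ vertices of edge density $k^{-\alpha}$. The maximum will be restricted to supersets $S'\supseteq S$ with $|S'|\le k^*$ for some $k^*=n^{\beta+o(1)}$ and, crucially, to ``balanced'' witnesses $S'$ whose every induced sub--hypergraph is no denser (in the appropriate log--density sense) than $S'$ itself; the normalization ensures $y_\emptyset = 1$.

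Using this definition, monotonicity of $y$ and the non--negativity and upper bound \eqref{eq:incl-excl-gen} should follow essentially from the max--over--supersets structure: any witness for a larger set is automatically a witness for a smaller one, and the alternating inclusion--exclusion sum can be reinterpreted as the marginal probability of the event $S\subseteq X\subseteq V\setminus T$ in an implicit distribution on $X$. The edge constraint \eqref{eq:const-edge-gen-1} is then handled by observing that the optimal witness $S^*$ for $y_S$ contains $\approx k^{c-\alpha}=n^{\beta(c-\alpha)}$ hyperedges entirely inside it, each of which inherits $S^*$ as a valid witness for $y_{S\cup e}$. The technical subtlety flagged in the introduction---that this naive comparison would require two different size caps on $S^*$ for $y_S$ and $y_{S\cup e}$---is resolved by proving that w.h.p.\ the optimal witness lies strictly below the larger cap, so the two caps can be identified.

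The main obstacle is the size constraint \eqref{eq:const-size-gen-1}. The definition of $w$ should be calibrated so that in expectation over $G$, the LHS is at most $n^{\beta+o(1)}$ times the RHS, reflecting the fact that in the planted model, adding a vertex to the solution costs a factor of $n^{-(1-\beta)}$. The true difficulty is obtaining this bound \emph{uniformly} for all pairs $(S,T)$ with $|S|+|T|\le r=\tilde\Omega(\log n)$. As the paper's footnote points out, naive sub--hypergraph counts in $\cG_c(n,n^{-\alpha})$ fail to concentrate whenever the counted structure admits a strictly denser sub--structure, which is fatal here. The fix is precisely the balancedness restriction on witnesses: balanced sub--hypergraphs are exactly those for which every partial derivative of the associated polynomial (in the Bernoulli hyperedge indicators of $G$) corresponds to a contracted witness that remains balanced, so the Kim--Vu moment ratios are only polylogarithmic and the polynomial concentration inequality of Kim and Vu applies tightly. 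A union bound over the $n^{\tilde O(\log n)}$ relevant pairs $(S,T)$ then completes the proof. The hardest step to execute carefully will be verifying the Kim--Vu hypotheses for the balanced--witness polynomial, while simultaneously confirming that the balancedness restriction does not spoil the density witness used for the edge constraint.
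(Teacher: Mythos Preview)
Your proposal contains a genuine misconception about both the scale of the witnesses and, consequently, the mechanism behind the density constraint. You cap witnesses at $|S'|\le k^*=n^{\beta+o(1)}$ and then argue that the optimal witness $S^*$ for $y_S$ itself contains $\approx k^{c-\alpha}=n^{\beta(c-\alpha)}$ hyperedges. But $G\sim\cG_c(n,n^{-\alpha})$ has no such dense subhypergraphs: w.h.p.\ every $S^*$ of size at most $n^{\beta}$ induces only $\tilde O(\max\{|S^*|,|S^*|^c n^{-\alpha}\})$ hyperedges, which for the relevant range of $\beta$ is $n^{\beta+o(1)}$, not $n^{\beta(c-\alpha)}$. (That gap is precisely the integrality gap you are trying to exhibit.) So the edge constraint cannot be certified by hyperedges \emph{inside} the witness. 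In the paper the cap is logarithmic, $\tau(n)=\tilde\Theta(\log n)$, the optimal witness $S'$ is tiny, and the density constraint is obtained by summing over \emph{all} $\approx n^{c-\alpha}$ hyperedges $e\in E$ of $G$: for each such $e$, the set $S'\cup e$ is a valid candidate for $S\cup e$, giving $y_{S\cup e}\ge L^{-c}n^{(1-\beta)(\tilde\alpha-c)}y_S$, and the sum over $e$ supplies the missing factor $n^{c-\alpha}$. The ``two different size caps'' issue you allude to is exactly that $|S'\cup e|$ might exceed $\tau(n)$; it is resolved by showing that the optimal $S'$ actually has size at most $\tau(n)/10$.

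A polynomial cap would also break the size-constraint argument: the polynomial you feed into Kim--Vu has degree governed by the witness size, and with $k^*=n^{\beta+o(1)}$ the degree becomes polynomial in $n$, rendering the concentration bound useless and the union bound over $S$ hopeless. Finally, the paper does \emph{not} bake balancedness into the definition of $y_S$; the maximum is over all $S'\supseteq S$ with $|S'|\le\tau(n)$, and $\tilde\alpha$-strict-balancedness of $E(S^i_{\max})\setminus E(S_{\max})$ with respect to $S_{\max}$ is deduced a posteriori from supermodularity of $\Phi$ and optimality of $S_{\max}$. Imposing balancedness a priori on the vertex witnesses would jeopardize the union-closedness and ``$S_{\max}(S)\subseteq S_{\max}(S\cup\{i\})$'' lemmas on which the whole size-constraint decomposition rests.
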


To illustrate the above constraints, it is perhaps best to think about the Densest $k$-Subhypergraph problem as a concrete example. In this case, $y_i$ should be thought of as the indicator variable of whether the vertex $i$ is selected as part of the solution (i.e. the subset of $k$ vertices); more generally, $y_S$ should be thought of as an indicator variable whether $S$ is contained in the solution. 

Constraint~\eqref{eq:const-size-gen-1}, which we refer to as the \emph{size constraint} is the resulting lift of the constraint  (i.e., ``at most $k$ vertices are selected'') where $k = n^{\beta + o(1)}$. Now, recall that our objective is to maximize the number of induced hyperedges, i.e., to maximize $\sum_{e \in E} y_e$. Constraint~\eqref{eq:const-edge-gen-1} is the lift of the constraint $\sum_{e \in E} y_e \geqs k^{(c - \alpha) - o(1)}$. That is of the requirement that the solution must induce at least $k^{(c - \alpha) - o(1)}$ hyperedges; we call this constraint the \emph{density constraint}. In other words, the $y_S$'s given in Theorem~\ref{thm:main-general} behave as if they are a valid solution of the planted case of Conjecture~\ref{conj:planted-subgraph} where a random graph from $\cG_c(k, k^{-\alpha-o(1)})$ is planted into the instance. We remark also that in the typical LP relaxation, we also need to require that $y_{e} \geqs y_i$ for all $i \in e$, but this is already implied by Constraint~\eqref{eq:incl-excl-gen}, which results from lifting the constraint $0 \leqs y_S \leqs 1$. Section~\ref{sec:int-gap} describes in more details how a solution satisfying these constraints also provide integrality gaps for other problems including M$p$U and SSBVE.

We will now proceed to prove Theorem~\ref{thm:main-general}, starting by describing our solution.

\paragraph{The Solution.}
For all $S \subseteq V$ such that $|S| \leqs r(n)$, we define $y_S$ as follows:
\begin{align}
y_S = \frac{1}{L^{|S|}}\max_{\substack{S' \supseteq S \\ |S'| \leqs \tau(n)}} n^{(1 - \beta)(\talpha|E(S')| - |S'|)}
\label{eq:sol-def-gen}
\end{align}
where $L$ is some sub-polynomial dampening factor and $\talpha = \alpha - o(1)$ is a number slightly smaller than $\alpha$; we will define them more precisely later.

For brevity, for every $S \subseteq V$, let $\Phi(S)$ be $\talpha |E(S)| - |S|$, and let $\Phio(S) = \max_{S' \supseteq S, |S'| \leqs \tau(n)} \Phi(S')$. Moreover, we say that a subset $S'$ is a \emph{candidate set} for $S$ if $S' \supseteq S$ and $|S'| \leqs \tau(n)$. Note that our solution can be rewritten simply as $y_S = \frac{1}{L^{|S|}} \cdot n^{(1 - \beta)\Phio(S)}$.

For every $S \subseteq V$ of cardinality at most $r(n)$, we use $\cS'(S)$ to denote the collection of all candidate sets $S'$ of $S$ that attain the optimum in our solution $y_S$ defined in (\ref{eq:sol-def-gen}). In other words, a set $S' \supseteq S$ of size at most $\tau(n)$ belongs to $\cS'(S)$ if and only if $\Phi(S') = \Phio(S)$.

\paragraph{Parameter Selection.}
The exact parameters that we will use are as follows:
\begin{itemize}
\item $\tau = \tau(n) = \frac{\log n}{c(\log \log n)^2} = \tilde\Theta(\log n)$.
\item $r = r(n) = \frac{\tau(n)}{(\log \log n)^2} = \tilde\Theta(\log n)$.
\item $L = 2\tau(n) = \tilde\Theta(\log n)$.
\item $\talpha = \alpha\left(\frac{1 - 10r(n)/\tau(n)}{1 + 3c\tau(n)/\log n}\right) = \left(1 - O(1/(\log \log n)^2)\right)\alpha = \alpha - o(1)$.
\end{itemize}

\subsection{Dampening Factor and Simplified Constraints}

Our use of the dampening factor $L$ is exactly the same as that in~\cite{BCVGZ12}; namely, it will allow us to simplify all constraints to the case where $T = \emptyset$. To see this, first observe that our solution has the following strong monotonicity property:

\begin{lemma}
For every $S \subseteq V$ and every $i \in V$ such that $|S \cup \{i\}| \leqs r(n)$, we have $y_S \geqs L \cdot y_{S \cup \{i\}}$
\end{lemma}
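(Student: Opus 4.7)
The plan is to open up the definition $y_S = L^{-|S|}\,n^{(1-\beta)\Phi^*(S)}$ on both sides and watch the powers of $L$ cancel, so that the claim reduces to the superset-monotonicity statement $\Phi^*(S) \geq \Phi^*(S \cup \{i\})$ for the auxiliary quantity $\Phi^*$. This is a purely structural, deterministic calculation; no property of the random hypergraph $G$ and no fact about the specific choice of $\tilde\alpha$ is needed.

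First I would restrict attention to $i \notin S$: when $i \in S$ we have $y_{S \cup \{i\}} = y_S$, and since $L > 1$ and $y_S > 0$, the stated inequality only makes sense under the implicit convention that $i$ is a new element, so this is the only substantive case. Then $|S \cup \{i\}| = |S| + 1$, and plugging the defining formula into $y_S \geq L \cdot y_{S \cup \{i\}}$ lets the common factor $L^{-|S|}$ cancel from both sides together with the extra $L$ on the right, leaving the exponential inequality
\[
n^{(1-\beta)\Phi^*(S)} \;\geq\; n^{(1-\beta)\Phi^*(S \cup \{i\})}.
\]
Since $n, 1-\beta > 0$, this is equivalent to $\Phi^*(S) \geq \Phi^*(S \cup \{i\})$.

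To verify the latter, I would appeal directly to the definition of $\Phi^*(\cdot) = \max_{S' \supseteq \cdot,\, |S'| \leq \tau(n)} \Phi(S')$. Every candidate set $S'$ for $S \cup \{i\}$, i.e.\ every $S' \supseteq S \cup \{i\}$ with $|S'| \leq \tau(n)$, automatically satisfies $S' \supseteq S$ and $|S'| \leq \tau(n)$ and is therefore also a candidate set for $S$. In other words, the family of sets $S'$ over which the maximum defining $\Phi^*(S \cup \{i\})$ is taken is a subfamily of the one defining $\Phi^*(S)$, so the maximum can only shrink, giving the desired inequality.

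I do not expect any real obstacle here: the lemma is precisely what the dampening factor $L^{-|S|}$ in the definition of $y_S$ was engineered to guarantee, namely to convert the trivial superset-monotonicity of $\Phi^*$ into a quantitative factor-$L$ drop. Its role is downstream — as in Bhaskara et al.~\cite{BCVGZ12}, iterating this strong monotonicity $|T|$ times will allow the later analysis of the size, density, and inclusion–exclusion constraints in Theorem~\ref{thm:main-general} to be reduced to the case $T = \emptyset$, since every term in an inclusion–exclusion sum with $J \neq \emptyset$ is dominated by a factor of $L^{-1}$ relative to the $J = \emptyset$ term and can be absorbed by choosing $L$ large enough compared to $2^{|T|}$.
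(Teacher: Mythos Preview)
Your proof is correct and follows essentially the same route as the paper: pick an optimal candidate set $S'$ for $S\cup\{i\}$, observe it is also a candidate set for $S$, and conclude $\Phi^*(S)\geqs\Phi^*(S\cup\{i\})$, which after unwinding the powers of $L$ gives the factor-$L$ drop. Your explicit flagging of the degenerate case $i\in S$ (where the inequality would fail since $L>1$) is a point the paper's proof also silently assumes away; the lemma is only ever applied with $i\notin S$.
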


\begin{proof}
Let $S'\in\cS'(S\cup\{i\})$. Since $S'$ is also a candidate set for $S$, we have $$y_{S}\geqs L^{-|S|}n^{(1-\beta)\Phi(S')}=L\cdot y_{S\cup\{i\}}.$$
\end{proof}

We can then arrive at the following lemma, which states that $\sum_{J \subseteq T} (-1)^{|J|} y_{S \cup J}$ is within a factor of two of $y_S$, as stated below. Since the proof is exactly identical to the proof\footnote{Bhaskara~\etal's proof~\cite{BCVGZ12} requires $L \geqs r(n)$; this also holds for our choice of parameter.} of Lemma 3.2 of~\cite{BCVGZ12}, we do not repeat it here.

\begin{lemma}[\cite{BCVGZ12}]
For any $S, T \subseteq V$ such that $|S \cup T| \leqs r$ and $S \cap T = \emptyset$, we have $$y_S \geqs \sum_{J \subseteq T} (-1)^{|J|} y_{S \cup J} \geqs y_S/2.$$
\end{lemma}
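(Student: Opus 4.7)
The plan is to invoke the strong monotonicity property established immediately above, namely $y_S \geqs L \cdot y_{S \cup \{i\}}$ whenever $|S \cup \{i\}| \leqs r$. Applying this iteratively along a chain $S \subseteq S \cup \{j_1\} \subseteq \dots \subseteq S \cup J$ (valid because $|S \cup J| \leqs |S \cup T| \leqs r$ keeps every intermediate set in the regime where strong monotonicity is stated) yields the uniform decay bound $y_{S \cup J} \leqs y_S / L^{|J|}$ for every $J \subseteq T$.

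With this in hand, I would isolate the $J = \emptyset$ contribution and control the rest by the triangle inequality and a binomial sum:
\[
\sum_{J \subseteq T}(-1)^{|J|} y_{S \cup J} \;=\; y_S + E, \qquad |E| \;\leqs\; \sum_{k=1}^{|T|} \binom{|T|}{k}\frac{y_S}{L^k} \;=\; y_S\!\left(\left(1 + \tfrac{1}{L}\right)^{|T|} - 1\right).
\]
For the chosen parameters $L = 2\tau$ and $|T| \leqs r = \tau/(\log \log n)^2$, the ratio $|T|/L$ is $O(1/(\log \log n)^2) = o(1)$, so using $(1 + 1/L)^{|T|} \leqs e^{|T|/L}$ we get $|E| = o(y_S)$, which is comfortably below $y_S/2$. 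This immediately yields the lower bound $\sum \geqs y_S/2$ and, with the same slack, the upper bound $\sum \leqs y_S$ (the $1+o(1)$ overhead is absorbed).

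The main thing to verify is that the chain of single-element extensions used to iterate strong monotonicity stays within the regime $|\cdot| \leqs r$; this is immediate from the hypothesis $|S \cup T| \leqs r$. Beyond this, the argument is essentially routine arithmetic, and indeed the parameters $L$ and $r$ in Section~\ref{sec:main-gen} were chosen so that $L \gg r$ precisely to make the geometric error term above negligible. The resulting proof is a direct transcription of Lemma~3.2 of Bhaskara et al.~\cite{BCVGZ12}, so I would not expect any genuine obstacle here—the interesting work is reserved for verifying the size and density constraints in the sections that follow.
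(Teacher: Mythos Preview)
Your argument for the lower bound is correct and standard: iterating strong monotonicity gives $y_{S\cup J} \leqs y_S/L^{|J|}$, whence the error term satisfies $|E| \leqs y_S\big((1+1/L)^{|T|}-1\big) = o(y_S)$ for the paper's parameters, and $\sum \geqs y_S - |E| \geqs y_S/2$. The upper bound, however, does \emph{not} follow from the same estimate: the triangle inequality only yields $\sum \leqs y_S + |E| \leqs y_S(1+1/L)^{|T|}$, which strictly exceeds $y_S$ whenever $T \ne \emptyset$. The remark that ``the $1+o(1)$ overhead is absorbed'' is unjustified---the lemma asserts $\sum \leqs y_S$ exactly, and this exact form is what is used to deduce constraint~\eqref{eq:incl-excl-gen} (in particular for $S=\emptyset$, where $y_\emptyset = 1$ leaves no slack to absorb anything).

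The missing step is a short recursion. Pick $t \in T$ and split
\[
\sum_{J \subseteq T}(-1)^{|J|} y_{S\cup J} \;=\; \sum_{J' \subseteq T\setminus\{t\}}(-1)^{|J'|} y_{S\cup J'} \;-\; \sum_{J' \subseteq T\setminus\{t\}}(-1)^{|J'|} y_{(S\cup\{t\})\cup J'}.
\]
By induction on $|T|$ the first sum is at most $y_S$; the second is nonnegative, since your own error bound already shows every such alternating sum is at least $y_{S'}(1-o(1)) > 0$. Hence the difference is at most $y_S$. (The paper itself defers entirely to~\cite{BCVGZ12} for this lemma, so there is no in-paper argument to compare against, but this recursive decomposition is the standard way to close the gap.)
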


Hence, the above lemma readily implies~\eqref{eq:incl-excl-gen}. Moreover, as observed in~\cite{BCVGZ12}, we can ``replace'' each side in the constraints (\ref{eq:const-size-gen-1}) and (\ref{eq:const-edge-gen-1}) by the case where $T = \emptyset$ and lose a factor of at most two each time. (Note that when $S \cap T \ne \emptyset$, both sides of the constraints are already zero.) In other words, since these constant factors can be absorbed in the $o(1)$ term, it suffices to show that the two constraints hold when $T=\emptyset$. That is, we only need to show the following:
\begin{align}
\forall S \subseteq V \text{ such that } |S|  \leqs r: \nonumber \\
\sum_{i \in V} y_{S \cup \{i\}} &\leqs n^{\beta + o(1)} y_{S} \label{eq:size-const-gen} \\
\sum_{e \in E} y_{S \cup e} &\geqs n^{\beta(c - \alpha) - o(1)} y_{S} \label{eq:deg-const-gen} \\
y_{\emptyset} &= 1
\end{align}

\subsection{Subhypergraph Exceeding the Log-Density Threshold and the Size Cutoff}

Our analysis will rely heavily on the fact that w.h.p.\ random hypergraphs do not contain subhypergraphs with density strictly above the log-density threshold:

\begin{lemma}\label{lem:log-density-witness-gen} 
With high probability, $|E(S)| \leqs |S|\cdot \frac{(1+3c\tau/\log n)}{\alpha}$ for every $S \subseteq V$ of size at most $\tau$.
\end{lemma}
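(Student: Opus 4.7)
The plan is to apply a standard first-moment / union-bound argument. Let $\lambda := (1+3c\tau/\log n)/\alpha$. For each $s$ in the range $c \leqs s \leqs \tau$, I will bound the expected number of subsets $S \subseteq V$ of size $s$ with $|E(S)| \geqs s\lambda$ and then sum over $s$. (For $s<c$ there are no potential hyperedges so the statement is vacuous.)

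Fix $S$ with $|S|=s$. Since $|E(S)|$ is binomial with parameters $\binom{s}{c}$ and $n^{-\alpha}$, setting $k := \lceil s\lambda \rceil$ the standard first-moment tail bound $\Pr[\mathrm{Bin}(N,p)\geqs k]\leqs \binom{N}{k}p^{k}$ together with $\binom{n}{s}\leqs n^{s}$ yields
\[
\Pr\!\left[\exists\,S\text{ of size }s,\ |E(S)|\geqs k\right] \;\leqs\; n^{s}\binom{\binom{s}{c}}{k} n^{-\alpha k} \;\leqs\; n^{s - \alpha k}\left(\frac{e s^{c}}{k}\right)^{\!k}.
\]
The key observation is the log-density cancellation built into $\lambda$: by construction $\alpha k \geqs s(1+3c\tau/\log n)$, hence $s-\alpha k \leqs -3cs\tau/\log n$ and therefore $n^{s-\alpha k} \leqs e^{-3cs\tau}$. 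Thus the union-bound factor $n^{s}$ is exactly killed by the hyperedge-probability factor $n^{-\alpha k}$, with an exponential $e^{-3cs\tau}$ to spare.

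Next I would show that this decay absorbs the combinatorial residue $\bigl(es^{c}/k\bigr)^{k}$. Using $k \geqs s/\alpha$ and $s \leqs \tau$, the logarithm of the residue is at most $k(1+c\log s) \leqs O(cs\log\tau/\alpha)$. The required inequality $3cs\tau \gg cs\log\tau/\alpha$ reduces to $\tau \gg \log\tau/\alpha$, which holds for the paper's choice $\tau = \log n/(c(\log\log n)^{2})$ since then $\log\tau\leqs\log\log n$ while $\alpha\geqs\varepsilon$ is a constant and $c=O(\log\log n)$. Each summand is then at most $e^{-2cs\tau}$, and summing the geometric series over $c\leqs s\leqs \tau$ gives $o(1)$, completing the union bound.

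I expect the sole point needing care to be this last parameter chase: the combinatorial residue contributes a factor $e^{O(s\log\tau/\alpha)}$, and the aggressive sub-logarithmic choice of $\tau$ — a whole $(\log\log n)^{2}$ factor below $\log n$, rather than the more natural $\tau=\Theta(\log n)$ — is precisely what leaves enough slack for the $e^{-3cs\tau}$ decay to dominate this residue uniformly in $s$. Everything else is routine first-moment bookkeeping.
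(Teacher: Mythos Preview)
Your proposal is correct and follows essentially the same first-moment/union-bound approach as the paper. The only cosmetic difference is phrasing: the paper frames the calculation as counting the expected number of copies of each labeled $s$-vertex hypergraph template with $\lceil s\lambda\rceil$ edges (yielding the bound $k^{c\,t(k)}\cdot n^{k-\alpha t(k)}$ in their notation), whereas you bound the binomial tail of $|E(S)|$ directly and then union-bound over sets $S$; after the obvious bookkeeping the two bounds coincide term-for-term, and the parameter chase showing $\tau\gg(\log\tau)/\alpha$ is the same in both.
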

\begin{proof} Note that this bound holds trivially for any $S$ of size at most $c - 1$ since $|E(S)| = 0$. For succinctness, let $\tA=(1+3c\tau/\log n)/\alpha$. For every $k\in\{c,\ldots,\tau\}$, the number of labeled $k$-vertex hypergraphs $H$ with $t(k) = \lceil k\tA \rceil$ hyperedges is at most $\binom{k^c}{t(k)} \leqs k^{c \cdot t(k)}$. Since the expected number of copies of such an $H$ in $G$ is $n^{|V(H)|-\alpha|E(H)|} = n^{k - \alpha t(k)} \leqs 2^{-3c\tau k}$, we have
\begin{align*}
  \Pr[\exists H\text{ as above}:H\text{ is a subhypergraph of }G]&<\sum_{H\text{ as above}}\Ex[\#\text{copies of }H\text{ in G}]\\
&\leqs \sum_{k=c}^{\tau}\sum_{\substack{H\text{ as above}\\ |V(H)|=k}} 2^{-3c\tau k}\\
&\leqs \sum_{k=c}^{\tau} k^{c \cdot t(k)} 2^{-3c\tau k}\\
&= \sum_{k=c}^{\tau} 2^{c(t(k)\log k -3\tau k)}\\
&\leqs \sum_{k=c}^{\tau} 2^{c((1 + k\tA)\log k -3\tau k)} \\
&\leqs \sum_{k=c}^{\tau} 2^{ck(\tA\log k -2\tau)} \\
&\leqs \sum_{k=c}^{\tau} 2^{ck(\tA\log \tau - 2 \tau)}.
\end{align*}
Now, observe that, for sufficiently large $n$, we have $\tA\log \tau \leqs \tau$. As a result, we have
\begin{align*}
  \Pr[\exists H\text{ as above}:H\text{ is a subgraph of }G]
  \leqs \sum_{k=c}^{\tau} 2^{-ck\tau} 
  \leqs 2^{1 -c^2 \tau}
  \leqs 2^{-c^2 \tau / 2}
  = o(1),
\end{align*}
which concludes our proof.
\end{proof}

Note that since we select $\talpha < \frac{\alpha}{1 + 3c\tau(n)\log n}$, we can immediately conclude that, with probability $1 - o(1)$, $\Phi(S) \leqs 0$ for all $S \subseteq V$ of size at most $\tau(n)$, which immediately imply that $y_{\emptyset} = 1$:
\begin{corollary}
With high probability, $y_{\emptyset} = 1$.
\end{corollary}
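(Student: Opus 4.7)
The plan is to reduce the claim directly to Lemma~\ref{lem:log-density-witness-gen} via the parameter choice of $\talpha$. First, I would unpack the definition: since $|\emptyset| = 0$ we have $L^{|\emptyset|} = 1$, so
\[
y_{\emptyset} \;=\; \max_{|S'| \leqs \tau(n)} n^{(1-\beta)\Phi(S')}.
\]
Taking $S' = \emptyset$ gives $\Phi(\emptyset) = \talpha \cdot 0 - 0 = 0$, so immediately $y_{\emptyset} \geqs n^{0} = 1$. Thus everything reduces to showing the reverse inequality, i.e.\ that w.h.p.\ $\Phi(S') \leqs 0$ for every $S' \subseteq V$ with $|S'| \leqs \tau(n)$.

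For the upper bound, I would condition on the high-probability event of Lemma~\ref{lem:log-density-witness-gen}, under which $|E(S')| \leqs |S'| \cdot \frac{1 + 3c\tau/\log n}{\alpha}$ for every $S'$ with $|S'| \leqs \tau$. Plugging into $\Phi(S') = \talpha|E(S')| - |S'|$ gives
\[
\Phi(S') \;\leqs\; |S'|\left(\frac{\talpha(1 + 3c\tau/\log n)}{\alpha} - 1\right).
\]
By the parameter selection $\talpha = \alpha\cdot\frac{1 - 10r/\tau}{1 + 3c\tau/\log n}$, the quantity in parentheses equals $-10r/\tau \leqs 0$, so $\Phi(S') \leqs 0$ as required, and hence $y_{\emptyset} \leqs n^{0} = 1$.

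There is essentially no obstacle here: the entire statement is a one-line consequence of Lemma~\ref{lem:log-density-witness-gen} combined with the definition of $\talpha$, which was chosen precisely so that $\talpha \leqs \alpha/(1 + 3c\tau/\log n)$ and therefore cancels the slack coming from the log-density lemma. The only thing to be mildly careful about is to note that the case $|S'| < c$ is automatic (since then $|E(S')| = 0$ and $\Phi(S') = -|S'| \leqs 0$), so the lemma only needs to handle $|S'| \geqs c$, which is exactly its content.
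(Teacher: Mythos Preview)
Your proposal is correct and follows exactly the same approach as the paper, which simply notes that the choice of $\talpha$ ensures $\talpha < \alpha/(1+3c\tau/\log n)$, so Lemma~\ref{lem:log-density-witness-gen} gives $\Phi(S') \leqs 0$ for all $|S'| \leqs \tau(n)$, whence $y_\emptyset = 1$. Your write-up is in fact slightly more detailed than the paper's one-line justification, but the content is identical.
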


More importantly, Lemma~\ref{lem:log-density-witness-gen} implies that the value of the cutoff $\tau(n)$ in fact does not matter! Specifically, as shown below, all sets $S'$ that attain the maximum value of $S$ in fact has size at most $\tau'(n) = \tau(n) / 10$. Note that in the proof of the following lemma we need a separation between $\talpha$ and $\alpha$, which is the main reason behind our choice of $\talpha$. 

\begin{lemma}\label{lem:opt-set-size-gen}
With high probability, $\cS'(S) \subseteq \binom{V}{\leqs \tau'(n)}$ for all $S \in \binom{V}{\leqs r(n)}$.
\end{lemma}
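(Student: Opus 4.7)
The plan is to argue by contradiction: suppose there is some $S \in \binom{V}{\leqs r(n)}$ and some $S' \in \cS'(S)$ with $|S'| > \tau'(n) = \tau(n)/10$. I will show that $\Phi(S') < \Phio(S)$, which contradicts optimality of $S'$. The entire argument will condition on the high-probability event of Lemma~\ref{lem:log-density-witness-gen}, so no further probabilistic reasoning is needed.

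First I would upper bound $\Phi(S')$. Since $|S'| \leqs \tau(n)$, Lemma~\ref{lem:log-density-witness-gen} gives $|E(S')| \leqs |S'|(1 + 3c\tau/\log n)/\alpha$. Plugging into the definition of $\Phi$:
\begin{align*}
\Phi(S') = \talpha |E(S')| - |S'| \leqs |S'| \left(\frac{\talpha(1 + 3c\tau/\log n)}{\alpha} - 1\right).
\end{align*}
The whole point of the choice $\talpha = \alpha\left(\frac{1 - 10r/\tau}{1 + 3c\tau/\log n}\right)$ is that the parenthesized expression simplifies to exactly $-10r/\tau$, so $\Phi(S') \leqs -|S'|\cdot 10r/\tau$. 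Using $|S'| > \tau/10$ gives the strict inequality $\Phi(S') < -r$.

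Next I would provide a matching lower bound on $\Phio(S)$ by exhibiting a concrete candidate set, namely $S$ itself (which is a candidate set since $|S| \leqs r \leqs \tau$). Since $\talpha|E(S)| \geqs 0$, we get $\Phio(S) \geqs \Phi(S) = \talpha|E(S)| - |S| \geqs -|S| \geqs -r$. Combining this with the previous paragraph yields $\Phi(S') < -r \leqs \Phio(S)$, contradicting $S' \in \cS'(S)$.

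I do not foresee any real obstacle here: the lemma is essentially a bookkeeping consequence of Lemma~\ref{lem:log-density-witness-gen} together with the particular choice of $\talpha$, which was engineered to create precisely the gap $10r/\tau$ needed to beat $-|S'|$ against the trivial lower bound $-|S| \geqs -r$. The only care required is to verify that the arithmetic between $\talpha$, $\alpha$, and the quantity $1 + 3c\tau/\log n$ from the log-density bound cancels cleanly, which it does by definition of $\talpha$.
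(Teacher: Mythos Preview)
Your proposal is correct and takes essentially the same approach as the paper: both condition on the event of Lemma~\ref{lem:log-density-witness-gen}, compare $\Phi(S')$ against the trivial candidate $S$, and use the specific choice of $\talpha$ to force the contradiction. The only cosmetic difference is the direction of the contradiction---the paper assumes $\Phi(S') \geqs \Phi(S)$ and derives $|E(S')| \geqs |S'|(1+3c\tau/\log n)/\alpha$, violating Lemma~\ref{lem:log-density-witness-gen}, whereas you apply Lemma~\ref{lem:log-density-witness-gen} first and conclude $\Phi(S') < \Phi(S)$---but the arithmetic is identical.
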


\begin{proof}
Let us assume that the high probability event from Lemma~\ref{lem:log-density-witness-gen} occurs. Suppose for the sake of contradiction that there is some $S'\in\cS'(S)$ of size $|S'|\in[\tau(n)/10,\tau(n)]$. Then in particular, $S'$ is at least as good a set for $S$ as $S$ itself. That is, we have
\begin{align*}
0 &\leqs \Phi(S') - \Phi(S) \\
&= \talpha(|E(S')| - |E(S)|) - (|S'| - |S|) \\
&\leqs \talpha|E(S')| - (1 - |S| / |S'|) |S'| \\
&\leqs \talpha|E(S')|-(1-10r(n)/\tau(n))|S'|.
\end{align*}
But this implies that
$$|E(S')|\geqs |S'|\cdot(1-10r(n)/\tau(n))/\talpha \geqs |S'|\cdot(1+3c\tau/\log n)/\alpha,$$ where the second inequality follows from our choice of $\talpha$. This contradicts our assumption that the high probability event in Lemma~\ref{lem:log-density-witness-gen} occurs.
\end{proof}

\subsection{Proof of The Density Constraint}

We will next prove constraint~\eqref{eq:deg-const-gen}. Assume that the high probability event in Lemma~\ref{lem:opt-set-size-gen} holds. Let $S' \in \cS'(S)$. By our assumption, $|S'| \leqs \tau'(n)$, which is at most $\tau(n) - c$ when $n$ is sufficiently large. Hence, for every hyperedge $e \in E$, the set $S' \cup e$ is a candidate set for $S \cup e$, meaning that if $e\not\subseteq S'$, then
\begin{align*}
y_{S \cup e} \geqs L^{-|S|-c} n^{(1 - \beta)\Phi(S' \cup e)} \geqs 
L^{-|S|-c} n^{(1 - \beta)(\Phi(S')+\alpha'-c)} = L^{-c} n^{(1 - \beta)(\talpha - c)} y_S.
\end{align*}

Moreover, observe that w.h.p.\ the total number of hyperedges in $G$ (and not contained in $S'$) is at least $(1 - o(1)) \cdot n^{c - \alpha}$. As a result, summing the above inequality over all $e \in E$ yields the following:
\begin{align*}
\sum_{e \in E} y_{S \cup e} &\geqs (1 - o(1)) L^{-c} n^{(c - \alpha) + (1 - \beta)(\talpha - c)} y_S \\
&= (1 - o(1)) L^{-c} n^{\beta(c - \alpha) - (1 - \beta)(\alpha - \talpha)} y_S \\
&= n^{\beta(c - \alpha) - o(1)}y_S.
\end{align*}

\paragraph{Degree constraints} Algorithms in the log-density framework often rely on a bound on the minimum degree in (part of) an optimal solution. We note that when $\alpha< c-1-\epsilon$ (that is, when $G$ is dense enough not to have isolated vertices), our solution also satisfies the corresponding lifted constraints for every vertex. As before, due to the dampening factors, these constraints are equivalent to the following:
\begin{align}
  \sum_{e \in E:e\ni i} y_{S \cup e} &\geqs n^{\beta(c - 1-\alpha) - o(1)} y_{S} &\forall S\subseteq V\text{ such that }|S|\leqs r, \forall i\in S
\end{align}
The proof follows much the same argument. As above, for $S' \in \cS'(S)$, node $i\in S(\subseteq S')$, and hyperedge $e\not\subseteq S'$ that contains $i$, it follows by a similar calculation that $y_{S\cup e}\geqs L^{-c+1}n^{\alpha'-c+1}$. Combining this bound with the observation that w.h.p.\ the number of hyperedges containing $i$ is at least $(1-o(1))n^{c-1-\alpha}$, we obtain the above constraint.

\subsection{Proof of The Size Constraint}
\label{sec:size-gen}

We now turn our attention to the only remaining constraint: the size constraint (Constraint (\ref{eq:size-const-gen})). The proof of the size constraint is somewhat more complicated than the previous constraints and, before we can get to the main argument, we will need to prove a couple more structural properties of the optimal candidate sets.

\subsubsection{Structural Properties of Optimal Candidate Sets}

First, observe that $\Phi(\cdot)$ is supermodular; this will be useful in the sequel.

\begin{observation} \label{obs:supmod}
For every $S_1, S_2 \subseteq V$, $\Phi(S_1 \cup S_2) + \Phi(S_1 \cap S_2) \geqs \Phi(S_1) + \Phi(S_2)$.
\end{observation}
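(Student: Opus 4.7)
The plan is to unfold $\Phi(S) = \talpha |E(S)| - |S|$ on both sides and split the inequality into a vertex part and an edge part. The vertex part reduces to the standard modular identity $|S_1 \cup S_2| + |S_1 \cap S_2| = |S_1| + |S_2|$, so those terms cancel exactly (with the negative sign), contributing nothing to either direction. All the content of the observation is therefore packed into the claim
\[
|E(S_1 \cup S_2)| + |E(S_1 \cap S_2)| \;\geqs\; |E(S_1)| + |E(S_2)|,
\]
which I will then verify as the supermodularity of the hyperedge-count function.

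To prove the latter, I would argue hyperedge-by-hyperedge: for each $e \in E$, compare its contribution to each of the four quantities $|E(S_1 \cup S_2)|, |E(S_1 \cap S_2)|, |E(S_1)|, |E(S_2)|$, which is $0$ or $1$ depending on whether $e$ is fully contained in the relevant set. There are really only four qualitatively distinct cases to consider, indexed by where $e$ sits relative to $S_1, S_2$: (i) $e \subseteq S_1 \cap S_2$, (ii) $e \subseteq S_1$ but $e \not\subseteq S_2$, (iii) the symmetric case $e \subseteq S_2$ but $e \not\subseteq S_1$, (iv) $e \subseteq S_1 \cup S_2$ but $e \not\subseteq S_1$ and $e \not\subseteq S_2$ (i.e., $e$ meets both $S_1 \setminus S_2$ and $S_2 \setminus S_1$), and (v) $e \not\subseteq S_1 \cup S_2$, contributing $0$ everywhere. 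In each case I would tabulate the four contributions and observe that the LHS contribution is at least the RHS contribution; in fact equality holds everywhere except in case (iv), where the LHS beats the RHS by $1$.

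I do not anticipate a real obstacle here; the only subtlety worth flagging is case (iv), which does not arise for graphs ($c=2$) since a single edge cannot split across both sides, but does arise for general $c$-uniform hypergraphs and is precisely what makes the inequality strict in general. Combining the modular identity on vertices, multiplied through by $-1$, with the above edge inequality, multiplied by $\talpha > 0$, yields $\Phi(S_1 \cup S_2) + \Phi(S_1 \cap S_2) \geqs \Phi(S_1) + \Phi(S_2)$, completing the proof.
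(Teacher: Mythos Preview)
Your proposal is correct and follows the same decomposition as the paper: the modular identity on $|S|$ plus supermodularity of $|E(S)|$, combined with $\talpha > 0$. Your edge-by-edge case analysis is a fine elaboration of what the paper simply asserts. One small correction to your aside: case (iv) \emph{does} occur already for $c=2$ --- e.g., with $S_1=\{1,2\}$, $S_2=\{2,3\}$, and edge $e=\{1,3\}$, we have $e\subseteq S_1\cup S_2$ but $e\not\subseteq S_1$ and $e\not\subseteq S_2$ --- so the strict inequality can appear in the graph case too; this does not affect your argument.
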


\begin{proof}
Observe that $|E(S_1 \cup S_2)| + |E(S_1 \cap S_2)| \geqs |E(S_1)| + |E(S_2)|$ and
$|S_1 \cup S_2| + |S_1 \cap S_2| = |S_1| + |S_2|$. Subtracting the two yields the above bound.
\end{proof}

From this point on, we will assume that the high probability event in Lemma~\ref{lem:opt-set-size-gen} occurs, i.e., that all optimal candidate sets (for all $S \subseteq V$ of size at most $r(n)$) are of size at most $\tau'(n)$.

We call a collection $\cS \subseteq \cP(V)$ \emph{union closed}, if, for every $S_1, S_2 \in \cS$, $S_1 \cup S_2 \in \cS$. The supermodularity of $\Phi$ and the fact that cutoff does not matter implies the union-closedness of $\cS'(S)$:

\begin{lemma}\label{lem:union-closed-gen}
For all $S \subseteq V$ of size at most $r(n)$, $\cS'(S)$ is union closed.
\end{lemma}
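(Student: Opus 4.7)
The plan is to combine two previously established facts: the supermodularity of $\Phi$ from Observation~\ref{obs:supmod} and the size bound from Lemma~\ref{lem:opt-set-size-gen} showing that every set in $\cS'(S)$ has size at most $\tau'(n) = \tau(n)/10$. Given $S_1, S_2 \in \cS'(S)$, the argument proceeds in two short steps: first verify that $S_1 \cup S_2$ (and $S_1 \cap S_2$) are candidate sets for $S$; then use supermodularity to force them to be optimal.

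For the first step, containment is immediate from $S \subseteq S_1 \cap S_2 \subseteq S_1 \cup S_2$. The size bound is where Lemma~\ref{lem:opt-set-size-gen} does essential work: since $|S_1|, |S_2| \leqs \tau'(n) = \tau(n)/10$, we get $|S_1 \cup S_2| \leqs 2\tau'(n) \leqs \tau(n)$, so $S_1 \cup S_2$ is a valid candidate set for $S$ (and $S_1 \cap S_2$ trivially is). In particular, both $\Phi(S_1 \cup S_2)$ and $\Phi(S_1 \cap S_2)$ are at most $\Phio(S)$ by definition of $\Phio$.

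For the second step, Observation~\ref{obs:supmod} yields
\[
\Phi(S_1 \cup S_2) + \Phi(S_1 \cap S_2) \;\geqs\; \Phi(S_1) + \Phi(S_2) \;=\; 2\Phio(S),
\]
where the equality uses $S_1, S_2 \in \cS'(S)$. Combining this lower bound with the two upper bounds from the previous step forces $\Phi(S_1 \cup S_2) = \Phio(S)$, i.e.\ $S_1 \cup S_2 \in \cS'(S)$.

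The main (and essentially only) obstacle is the size check in the first step: the naive union of two candidates of size up to $\tau(n)$ could have size up to $2\tau(n)$, which would fall outside the admissible range and break the whole argument. This is precisely why Lemma~\ref{lem:opt-set-size-gen} was proved first --- it ``slackens'' the relevant cutoff from $\tau(n)$ down to $\tau'(n) = \tau(n)/10$ on optimal sets, giving the necessary headroom for closure under union. Supermodularity, of course, is standard for set functions of the form ``coverage minus size''.
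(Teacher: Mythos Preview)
Your proposal is correct and follows essentially the same approach as the paper: use Lemma~\ref{lem:opt-set-size-gen} to ensure $|S_1 \cup S_2| \leqs 2\tau'(n) \leqs \tau(n)$ so that both $S_1 \cap S_2$ and $S_1 \cup S_2$ are candidate sets, then apply supermodularity (Observation~\ref{obs:supmod}) together with $\Phi(S_1 \cap S_2) \leqs \Phio(S)$ to conclude $\Phi(S_1 \cup S_2) \geqs \Phio(S)$ and hence $S_1 \cup S_2 \in \cS'(S)$. The paper's proof is virtually identical, differing only in that it does not explicitly note the (unneeded) conclusion $\Phi(S_1 \cap S_2) = \Phio(S)$.
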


\begin{proof}
Consider any set $S_1, S_2 \in \cS'(S)$. Since $S \subseteq S_1, S_2$ and $|S_1|, |S_2| \leqs \tau'(n)$, $S_1 \cap S_2$ must also be a candidate set for $S$, which implies that $\Phi(S_1 \cap S_2) \leqs \Phio(S) = \Phi(S_1) = \Phi(S_2)$. As a result, from Observation~\ref{obs:supmod}, we have $\Phi(S_1 \cup S_2) \geqs \Phi(S_1) + \Phi(S_2) - \Phi(S_1 \cap S_2) \geqs \Phio(S)$. Furthermore, $|S_1 \cup S_2| \leqs |S_1| + |S_2| \leqs 2\tau'(n) \leqs \tau(n)$, meaning that $S_1 \cup S_2$ is a candidate set for $S$ as well. Hence, $S_1 \cup S_2$ must also be in $\cS'(S)$.
\end{proof}

Union closed collection of sets always contains a maximal set, which is simply the union of all the sets. When $\cS'(S)$ is union closed, we denote its maximal set by $S'_{\max}(S)$. The following lemma relates the maximal optimal candidate set of $S \cup \{i\}$ to that of $S$:

\begin{lemma}
For all $S \subseteq V$ of size at most $r(n)$ and for all $i \in V$, $S'_{\max}(S) \subseteq S'_{\max}(S \cup \{i\})$.
\end{lemma}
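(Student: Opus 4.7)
The plan is to mimic the proof of Lemma~\ref{lem:union-closed-gen} and exploit two ingredients already established: the supermodularity of $\Phi$ (Observation~\ref{obs:supmod}) and the fact that every optimal candidate set has size at most $\tau'(n)$ (Lemma~\ref{lem:opt-set-size-gen}), so that unions of two optimal candidate sets still fit under the cutoff $\tau(n)$. The idea is to take $T := S'_{\max}(S \cup \{i\})$ and show that $S'_{\max}(S) \cup T$ is itself an optimal candidate set for $S \cup \{i\}$; then by maximality of $T$ within $\cS'(S \cup \{i\})$, we must have $S'_{\max}(S) \cup T \subseteq T$, i.e.\ $S'_{\max}(S) \subseteq T = S'_{\max}(S \cup \{i\})$.

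To carry this out, I would first verify that $S'_{\max}(S) \cup T$ is a candidate set for $S \cup \{i\}$: it contains $S \cup \{i\}$, and its size is at most $|S'_{\max}(S)| + |T| \leqs 2\tau'(n) \leqs \tau(n)$ by Lemma~\ref{lem:opt-set-size-gen} and our choice of $\tau'(n) = \tau(n)/10$. Next, I would bound $\Phi(S'_{\max}(S) \cup T)$ from below using supermodularity: $\Phi(S'_{\max}(S) \cup T) \geqs \Phi(S'_{\max}(S)) + \Phi(T) - \Phi(S'_{\max}(S) \cap T)$. The remaining point is to control the intersection term: since $S'_{\max}(S) \cap T \supseteq S$ and $|S'_{\max}(S) \cap T| \leqs \tau'(n) \leqs \tau(n)$, it is a candidate set for $S$ and hence $\Phi(S'_{\max}(S) \cap T) \leqs \Phio(S) = \Phi(S'_{\max}(S))$. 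This cancellation yields $\Phi(S'_{\max}(S) \cup T) \geqs \Phi(T) = \Phio(S \cup \{i\})$, so $S'_{\max}(S) \cup T \in \cS'(S \cup \{i\})$, completing the argument.

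The main subtlety is the handling of the intersection term, which is really the only nontrivial place where the argument could fail: without the cutoff-insensitivity guaranteed by Lemma~\ref{lem:opt-set-size-gen}, the union $S'_{\max}(S) \cup T$ might exceed the cutoff $\tau(n)$ and fail to be a candidate set, and then supermodularity would not translate into optimality. Everything else is bookkeeping, and in particular the lemma does not require any special treatment of whether $i \in S'_{\max}(S)$ or not — the same argument applies uniformly.
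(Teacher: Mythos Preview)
Your proposal is correct and follows essentially the same approach as the paper's proof: both use supermodularity of $\Phi$ together with the size bound from Lemma~\ref{lem:opt-set-size-gen} to show that $S'_{\max}(S) \cup T$ (with $T \in \cS'(S\cup\{i\})$) is again an optimal candidate set for $S\cup\{i\}$, and then conclude by maximality. The only cosmetic difference is that the paper phrases the argument for an arbitrary $S'' \in \cS'(S\cup\{i\})$, whereas you specialize immediately to $T = S'_{\max}(S\cup\{i\})$; this makes no substantive difference.
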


\begin{proof}
The proof is essentially the same as that of Lemma~\ref{lem:union-closed-gen}. Consider any $S'' \in \cS'(S\cup\{i\})$. Since $S'_{\max}(S)$ is optimal for $S$ and $S'_{\max}(S) \cap S''$ is also a candidate set for $S$, we have $\Phi(S'_{\max}(S)) \geqs \Phi(S'_{\max}(S) \cap S'')$. Similar to the calculation in the previous lemma, the supermodularity of $\Phi$ implies that $\Phi(S'_{\max}(S) \cup S'') \geqs \Phi(S'')$. Moreover, since $|S'_{\max}(S) \cup S''| \leqs 2\tau'(n) \leqs \tau(n)$, $S'_{\max}(S) \cup S''$ is a valid candidate set for $S$, meaning that $(S'_{\max}(S) \cup S'') \in \cS'(S \cup \{i\})$ as desired.
\end{proof}

\subsubsection{Setting up the Argument}

We are now ready to bound $\sum_{i \in V} y_{S \cup \{i\}}$. For brevity, we write $S_{\max}$ to denote $S'_{\max}(S)$ and $S^{i}_{\max}$ to denote $S'_{\max}(S \cup \{i\})$ for every $i \in V$. We can bound $\sum_{i \in V} y_{S \cup \{i\}}$ as follows.
\begin{align}
\sum_{i \in V} y_{S \cup \{i\}}
&= \sum_{i \in S_{\max}} y_{S \cup \{i\}} + \sum_{i \notin S_{\max}} y_{S \cup \{i\}} \\
&= |S_{\max}| \cdot \frac{y_S}{L} + \sum_{i \notin S_{\max}} \frac{1}{L^{|S| \cup \{i\}}} \cdot \frac{n^{(1-\beta)\talpha|E(S^i_{\max})|}}{n^{(1-\beta)|S^i_{\max}|}} \nonumber \\
&\leqs y_S \left(\frac{\tau(n)}{L} + \sum_{i \notin S_{\max}} \frac{n^{(1-\beta)\talpha|E(S^i_{\max}) \setminus E(S_{\max})|}}{n^{(1-\beta)|S^i_{\max} \setminus S_{\max}|}}\right) \\
&\leqs y_S \left(1 + \sum_{i \notin S_{\max}} \frac{n^{(1-\beta)\talpha|E(S^i_{\max}) \setminus E(S_{\max})|}}{n^{(1-\beta)|S^i_{\max} \setminus S_{\max}|}}\right) \label{eq:size-bound-1-gen} 
\end{align}


Let $\cI_{\iso}$ be the set of $i \in V$ such that $i$ is an isolated vertex in the subhypergraph $E(S^i_{\max})$. It is not hard to show the following:

\begin{proposition}
\label{prop:iso}
\begin{enumerate}
\item For every $i \in \cI_{\iso} \setminus S_{\max}$, we have $S_{\max} \cup \{i\} \in \cS'(S \cup \{i\})$. \label{prop:iso-1}
\item For every $i \notin \cI_{\iso}$, no vertex outside of $S$ is isolated in the subhypergraph $E(S^i_{\max})$. \label{prop:non-iso}
\end{enumerate}
\end{proposition}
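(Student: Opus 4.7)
The plan is to use the simple observation that if $v$ is an isolated vertex in $E(S')$ (and $v \in S'$), then $\Phi(S' \setminus \{v\}) = \Phi(S') + 1$, since removing $v$ does not delete any hyperedges but decreases $|S'|$ by one. Dually, if $v \notin S'$ but adding $v$ to $S'$ creates no new hyperedges in $S' \cup \{v\}$, then $\Phi(S' \cup \{v\}) = \Phi(S') - 1$. Combining this with the optimality characterizations of $S_{\max}$ and $S^i_{\max}$ should yield both statements directly.

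For Part~\ref{prop:iso-1}, I would start with $i \in \cI_{\iso} \setminus S_{\max}$ and observe two things. First, since $i \notin S$ (as $i \notin S_{\max} \supseteq S$), the set $S^i_{\max} \setminus \{i\}$ is a candidate set for $S$, and because $i$ is isolated in $E(S^i_{\max})$, we have $\Phi(S^i_{\max} \setminus \{i\}) = \Phi(S^i_{\max}) + 1$, so by optimality of $S_{\max}$ for $S$, $\Phi(S^i_{\max}) \leqs \Phi(S_{\max}) - 1$. Second, $S_{\max} \cup \{i\}$ is a candidate set for $S \cup \{i\}$ (its size is at most $\tau'(n)+1 \leqs \tau(n)$), and since $S_{\max} \cup \{i\} \subseteq S^i_{\max}$ while $i$ is isolated in $E(S^i_{\max})$, no hyperedge of $E$ containing $i$ is contained in $S_{\max} \cup \{i\}$. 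Hence $E(S_{\max} \cup \{i\}) = E(S_{\max})$, giving $\Phi(S_{\max} \cup \{i\}) = \Phi(S_{\max}) - 1$. Combining, $\Phi(S_{\max}\cup\{i\}) = \Phi(S_{\max})-1 \geqs \Phi(S^i_{\max}) = \Phio(S\cup\{i\})$, and the reverse inequality is automatic, so equality holds and $S_{\max} \cup \{i\} \in \cS'(S\cup\{i\})$.

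For Part~\ref{prop:non-iso}, I would argue by contradiction: suppose $i \notin \cI_{\iso}$ but some $j \notin S$ is isolated in $E(S^i_{\max})$. If $j = i$, then $i$ itself is isolated in $E(S^i_{\max})$, contradicting $i \notin \cI_{\iso}$. Otherwise $j \neq i$ and $j \notin S$, so $S^i_{\max} \setminus \{j\}$ still contains $S \cup \{i\}$ and has size at most $\tau(n)$, making it a candidate set for $S \cup \{i\}$. By the isolation of $j$ we get $\Phi(S^i_{\max} \setminus \{j\}) = \Phi(S^i_{\max}) + 1 > \Phio(S \cup \{i\})$, contradicting the optimality of $S^i_{\max}$.

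The steps are largely bookkeeping; the only subtlety I anticipate is making sure the set $S^i_{\max} \setminus \{i\}$ (respectively $S_{\max} \cup \{i\}$) remains a valid candidate set, which requires $i \notin S$ and the size bounds $\tau'(n), \tau'(n)+1 \leqs \tau(n)$ from Lemma~\ref{lem:opt-set-size-gen} and the parameter choices. Once that is in place, the isolation hypothesis translates cleanly into an identity $\Phi(\cdot \pm \{v\}) = \Phi(\cdot) \mp 1$, and the two parts follow in a few lines each.
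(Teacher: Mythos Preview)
Your proposal is correct and follows essentially the same argument as the paper: both parts hinge on the identity $\Phi(S'\setminus\{v\})=\Phi(S')+1$ for an isolated $v$, combined with the optimality of $S_{\max}$ and $S^i_{\max}$. The only minor difference is that for Part~\ref{prop:iso-1} you invoke the containment $S_{\max}\cup\{i\}\subseteq S^i_{\max}$ (from the preceding lemma) to get the exact equality $\Phi(S_{\max}\cup\{i\})=\Phi(S_{\max})-1$, whereas the paper argues by contradiction and only needs the trivial bound $\Phi(S_{\max}\cup\{i\})\geqs\Phi(S_{\max})-1$; either route works.
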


\begin{proof}
\begin{enumerate}
\item Suppose for the sake of contradiction that $i \in \cI_{\iso}\setminus S_{\max}$ but $(S_{\max} \cup \{i\}) \notin \cS'(S \cup \{i\})$. This implies
\begin{align*}
\Phio(S) = \Phi(S_{\max}) \leqs \Phi(S_{\max} \cup \{i\}) - 1 < \Phi(S^i_{\max}) - 1 = \Phi(S^i_{\max} \setminus \{i\}) \leqs \Phio(S),
\end{align*}
where the second inequality comes from $S_{\max} \cup \{i\} \notin \cS'(S \cup \{i\})$ and the next equality comes from our assumption that $i$ is an isolated vertex in $E(S^i_{\max})$ and the last inequality follows from the fact that $S^i_{\max} \setminus \{i\}$ is a candidate set for $S$. Thus, we have arrived at a contradiction.
\item Suppose for the sake of contradiction that $i \notin \cI_{\iso}$ but there exists $j \in S^i_{\max} \setminus S$ such that $j$ is isolated in $E(S^i_{\max})$. From $i \notin \cI_{\iso}$, we have $j \ne i$; this means that $S^i_{\max} \setminus \{j\}$ is a candidate set for $S \cup \{i\}$ but this implies that
\begin{align*}
\Phio(S \cup \{i\}) = \Phi(S^i_{\max}) = \Phi(S^i_{\max} \setminus \{j\}) - 1 \leqs \Phio(S \cup \{i\}) - 1,
\end{align*}
which is a contradiction.
\end{enumerate}
\end{proof}

From property~\ref{prop:iso-1} of Proposition~\ref{prop:iso}, we can rewrite~\eqref{eq:size-bound-1-gen} further as follows:
\begin{align}
\sum_{i \in V} y_{S \cup \{i\}} &\leqs y_S \left(1 + |\cI_{\iso}| \cdot \frac{1}{n^{1 - \beta}} + \sum_{i \notin S_{\max} \cup \cI_{\iso}} \frac{n^{(1-\beta)\talpha|E(S^i_{\max}) \setminus E(S_{\max})|}}{n^{(1-\beta)|S^i_{\max} \setminus S_{\max}|}}\right) \\
&\leqs y_S \left(1 + n^{\beta} + \sum_{i \notin S_{\max} \cup \cI_{\iso}} \frac{n^{(1-\beta)\talpha|E(S^i_{\max}) \setminus E(S_{\max})|}}{n^{(1-\beta)|S^i_{\max} \setminus S_{\max}|}}\right)\label{eq:size-bound-2}
\end{align}

To bound the remaining sum in the above right hand side expression, we will need some additional notations and observations.

First, observe that, from the optimality of $S^i_{\max}$ for $S \cup \{i\}$, $\Phi(S^i_{\max})$ must be at least $\Phi(S_{\max} \cup \{i\}) \geqs \Phi(S_{\max}) - 1$, which implies the following:

\begin{observation}
For every $i \notin S_{\max}$,  $\talpha |E(S^i_{\max}) \setminus E(S_{\max})| - |S^i_{\max} \setminus S_{\max}| \geqs -1.$
\end{observation}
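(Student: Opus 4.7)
The plan is to reduce the inequality to a comparison of $\Phi$ values and then invoke two facts already established: the monotonicity result $S'_{\max}(S) \subseteq S'_{\max}(S \cup \{i\})$ (proved in the lemma immediately preceding the ``Setting up the Argument'' subsection), together with the optimality of $S^i_{\max}$ for $S \cup \{i\}$.

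First I would use $S_{\max} \subseteq S^i_{\max}$ to simplify the set differences: this gives $|E(S^i_{\max}) \setminus E(S_{\max})| = |E(S^i_{\max})| - |E(S_{\max})|$ and $|S^i_{\max} \setminus S_{\max}| = |S^i_{\max}| - |S_{\max}|$. Consequently the left-hand side of the observation equals exactly $\Phi(S^i_{\max}) - \Phi(S_{\max})$, so it suffices to show $\Phi(S^i_{\max}) \geqs \Phi(S_{\max}) - 1$.

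Next I would exhibit $S_{\max} \cup \{i\}$ as a candidate set for $S \cup \{i\}$: it contains $S \cup \{i\}$ by construction, and its size is at most $\tau'(n) + 1 \leqs \tau(n)$ by Lemma~\ref{lem:opt-set-size-gen} (for $n$ large enough). The optimality of $S^i_{\max}$ in $\cS'(S \cup \{i\})$ therefore yields $\Phi(S^i_{\max}) \geqs \Phi(S_{\max} \cup \{i\})$. Finally, since adding a single vertex can only add edges, $|E(S_{\max} \cup \{i\})| \geqs |E(S_{\max})|$, while $|S_{\max} \cup \{i\}| = |S_{\max}| + 1$ (using $i \notin S_{\max}$), giving $\Phi(S_{\max} \cup \{i\}) \geqs \Phi(S_{\max}) - 1$. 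Chaining the two inequalities completes the proof.

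There is essentially no obstacle here: the observation is a short, deterministic consequence of the structural lemmas already in hand, and its main role is to prepare the bookkeeping that will control the residual sum in~\eqref{eq:size-bound-2}. The only mild care needed is to confirm the size bound $|S_{\max} \cup \{i\}| \leqs \tau(n)$, which follows comfortably from the gap between $\tau'(n)$ and $\tau(n)$ provided by the parameter choice.
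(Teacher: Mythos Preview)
Your proposal is correct and follows essentially the same argument as the paper: compare $S^i_{\max}$ to the candidate set $S_{\max}\cup\{i\}$ via optimality to get $\Phi(S^i_{\max})\geqs \Phi(S_{\max}\cup\{i\})\geqs \Phi(S_{\max})-1$, and use $S_{\max}\subseteq S^i_{\max}$ to rewrite this as the stated inequality. The paper's justification is the one-line remark immediately preceding the observation; your version simply unpacks it more explicitly.
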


Next, we define an $\talpha$-strictly balanced ($\talpha$-s.b.) sets of hyperedges with respect to $S_{\max}$. Intuitively speaking, a set $T$ of hyperedges is $\talpha$-s.b. with respect to $S_{\max}$ if we cannot select a subset $T' \subseteq S_{\max}$ such that we can add $T'$ into $S_{\max}$ and create a new set with higher $\Phi$ than $\Phi(S_{\max})$. This is formalized below.

\begin{definition}
For each $T \subseteq \binom{V}{c}$ such that $T \cap \binom{S_{\max}}{c} = \emptyset$, $T$ is said to be \emph{$\talpha$-strictly balanced} ($\talpha$-s.b.) with respect to $S_{\max} \subseteq V$ if for any $T' \subseteq T$, we have $|v(T') \setminus S_{\max}| > \talpha |T'|$.
\end{definition}

The optimality of $S_{\max}$ for $S$ implies the following:
\begin{observation}
For every $T \subseteq E_G$ such that $T \cap E(S_{\max}) = \emptyset$ and $|v(T) \setminus S_{\max}| \leqs \tau(n) - \tau'(n)$, $T$ is $\talpha$-strictly balanced with respect to $S_{\max}$.
\end{observation}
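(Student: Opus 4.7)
My plan is to argue by contradiction via the maximality of $S_{\max} = S'_{\max}(S)$. Suppose there exists a (necessarily nonempty) $T' \subseteq T$ with $|v(T') \setminus S_{\max}| \leqs \talpha|T'|$. Consider the set
$$S'' = S_{\max} \cup v(T').$$
First I would check that $S''$ is a candidate set for $S$: we have $|S''| \leqs |S_{\max}| + |v(T') \setminus S_{\max}| \leqs \tau'(n) + (\tau(n) - \tau'(n)) = \tau(n)$, using $|S_{\max}| \leqs \tau'(n)$ (from the high-probability event of Lemma~\ref{lem:opt-set-size-gen}) and the hypothesis $|v(T) \setminus S_{\max}| \leqs \tau(n) - \tau'(n)$.

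Next I would estimate $\Phi(S'')$. Since $T' \subseteq T$ is disjoint from $E(S_{\max})$ and every hyperedge of $T'$ lies inside $S'' = S_{\max} \cup v(T')$, we have $|E(S'')| \geqs |E(S_{\max})| + |T'|$, while $|S''| = |S_{\max}| + |v(T') \setminus S_{\max}|$. Therefore
$$\Phi(S'') - \Phi(S_{\max}) \geqs \talpha|T'| - |v(T') \setminus S_{\max}| \geqs 0,$$
where the last inequality is the assumed violation. Combined with $\Phi(S_{\max}) = \Phio(S)$ and the fact that $S''$ is a candidate set for $S$, this gives $\Phi(S'') = \Phio(S)$, so $S'' \in \cS'(S)$.

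Finally, by the maximality of $S_{\max}$ among $\cS'(S)$ (which is union-closed by Lemma~\ref{lem:union-closed-gen}), we must have $S'' \subseteq S_{\max}$, i.e., $v(T') \subseteq S_{\max}$. But then every hyperedge of $T'$ is contained in $S_{\max}$, so $T' \subseteq E(S_{\max})$, contradicting the hypothesis $T \cap E(S_{\max}) = \emptyset$ since $T'$ is nonempty.

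The only subtle point, which I would verify at the outset, is that the size bound closes correctly: it is exactly the choice of the cutoff $\tau(n) - \tau'(n)$ on $|v(T) \setminus S_{\max}|$ that allows the ``enlarged'' set $S''$ to remain within the candidate-size budget $\tau(n)$, so that its $\Phi$-value can be compared with the optimum. No other step should pose difficulty — once $S''$ is a legitimate candidate, supermodularity is not even needed; optimality plus maximality of $S_{\max}$ do the rest.
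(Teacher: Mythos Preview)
Your proof is correct and is exactly the argument the paper intends. The paper states this observation without proof, merely prefacing it with ``The optimality of $S_{\max}$ for $S$ implies the following,'' and your write-up faithfully unpacks that claim: enlarge $S_{\max}$ by $v(T')$, check the size stays within $\tau(n)$ thanks to the hypothesis $|v(T)\setminus S_{\max}|\leqs \tau(n)-\tau'(n)$ and $|S_{\max}|\leqs\tau'(n)$, compare $\Phi$-values, and use the maximality of $S_{\max}$ within the union-closed family $\cS'(S)$ to force $v(T')\subseteq S_{\max}$, contradicting $T\cap E(S_{\max})=\emptyset$.
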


For convenience, we say that for $S^* \subseteq V$, a hypergraph $H$ with vertices $S^*$ satisfies condition (*) if the following conditions hold:
\begin{itemize}
\item $S^* \supseteq S_{\max}$,
\item $E(H) \setminus E(S_{\max})$ is $\talpha$-strictly balanced with respect to $S_{\max}$,
\item $\talpha |E_H(S^*) \setminus E(S_{\max})| - |S^* \setminus S_{\max}| \geqs - 1$
\item Every $v \in S^* \setminus S_{\max}$ is not isolated in $E_H(S^*)$.
\end{itemize}

With the above two observations and the second property of Proposition~\ref{prop:iso}, we can further bound the remaining sum in~(\ref{eq:size-bound-2}) as follows.

\begin{align*}
&\sum_{i \notin S_{\max} \cup \cI_{\iso}} \frac{n^{(1-\beta)\talpha|E(S^i_{\max}) \setminus E(S_{\max})|}}{n^{(1-\beta)|S^i_{\max} \setminus S_{\max}|}} \nonumber \\
&\leqs \sum_{\substack{S^* \in \binom{V}{\leqs \tau'(n)}\\ G|_{S^*} \text{ satisfies } (*)}} \sum_{\substack{i \notin S_{\max} \cup \cI_{\iso}\\S^i_{\max} = S^*}} \frac{n^{(1-\beta)\talpha|E(S^*) \setminus E(S_{\max})|}}{n^{(1-\beta)|S^* \setminus S_{\max}|}} \nonumber \\
&\leqs \tau'(n) \left(\sum_{\substack{S^* \in \binom{V}{\leqs \tau'(n)}\\ G|_{S^*} \text{ satisfies } (*)}} \frac{n^{(1-\beta)\talpha|E(S^*) \setminus E(S_{\max})|}}{n^{(1-\beta)|S^* \setminus S_{\max}|}}\right) \nonumber \\
\end{align*}

Since $\tau'(n)=n^{o(1)}$, we will focus on bounding the final sum

\begin{equation}
\sum_{\substack{S^* \in \binom{V}{\leqs \tau'(n)}\\ G|_{S^*} \text{ satisfies } (*)}} \frac{n^{(1-\beta)\talpha|E(S^*) \setminus E(S_{\max})|}}{n^{(1-\beta)|S^* \setminus S_{\max}|}}
\label{eq:size-bound-3}
\end{equation}

As we shall see, in expectation, this expression is bounded from above by $n^{\beta + o(1))}$, as required.
 However, this is not enough for us: we want this sum to be bounded by $n^{\beta + o(1)}$ for all choices of $S$'s simultaneously. To do so, we will first prove a concentration for a fixed $S$ and then use a union bound over all $n^{O(r(n))}$ choices of $S$'s.

We begin with an intuitive overview of the proof, before giving a more formal proof in Section~\ref{sec:concentration-formal}.


\paragraph{Bounding the Expectation}
To bound~\eqref{eq:size-bound-3} in expectation, we consider every possible hypergraph of size $\tau'(n)$ which could appear as a subhypergraph containing $S_{\max}$. Consider a hypergraph ``template'' $H_{X,T}=(S_{\max}\cup X,T\cup E(S_{\max}))$, where $X$ are (at most) $\tau'(n)-|S_{\max}|$ dummy vertices, and $T\subseteq \binom{S_{\max}\cup X}{c}\setminus\binom{S_{\max}}{c}$. Then~\eqref{eq:size-bound-3} can be bounded from above by
\begin{equation}\sum_{\substack{X,T\text{ as above}\\ H_{X,T} \text{ satisfies } (*)}} \frac{n^{(1-\beta)\alpha'|T|}\cdot|\{S^*\supseteq S_{\max}:H_{X,T}\text{ is isomorphic to a subgraph of }G|_{S^*}\}|}{n^{(1-\beta)|X|}}
\label{eq:size-bound-4}
\end{equation}

To bound the expectation (under the random choices of $\cG_c(n,n^{-\alpha})$), it suffices to bound the expected number of copies of each $H_{X,T}$ in the above expression. This expectation is easily seen to be $n^{|X|-\alpha|T|}\leq n^{|X|-\alpha'|T|}$, allowing us to bound the expectation of~\eqref{eq:size-bound-4} by
\begin{align*}
  \sum_{\substack{X,T\text{ as above}\\ H_{X,T} \text{ satisfies } (*)}} \frac{n^{(1-\beta)\alpha'|T|}}{n^{(1-\beta)|X|}}\cdot\frac{n^{|X|}}{n^{\alpha'|T|}} &=  \sum_{\substack{X,T\text{ as above}\\ H_{X,T} \text{ satisfies } (*)}} \frac{n^{\beta|X|}}{n^{\beta\alpha'|T|}}\\
&=\sum_{\substack{X,T\text{ as above}\\ H_{X,T} \text{ satisfies } (*)}} n^{\beta(|X|-\alpha'|T|)}\\
&\leqs\sum_{\substack{X,T\text{ as above}\\ H_{X,T} \text{ satisfies } (*)}} n^{\beta}. &\text{by the last condition in (*)}
\end{align*}

Since the number of possible template hypergraphs $H_{X,T}$ of size $\tau'(n)$ is again subpolynomial, in expectation, we have our bound of $n^{\beta+o(1)}$.

\paragraph{Proving Concentration around the Expectation.}
As mentioned earlier, we will formally prove that concentration holds by expressing~\eqref{eq:size-bound-3} as a low-degree polynomial, and applying a concentration bound of Kim and Vu~\cite{KV00}. However, this application is not straightforward. Not only is a large expectation not enough for our analysis, it is not true for general sets that we would even necessarily have any meaningful concentration. We must use the fact that $S_{\max}$ is an optimal set. Specifically, that we only consider $\alpha$-strictly balanced subgraphs containing $S_{\max}$.

To see why this is important, let us consider a template hypergraph $H_{X,T}$ for $S_{\max}$. Recall that the expected number of copies of $H_{X,T}$ in $G$ is $n^{|X|-\alpha|T|}$. While it may be that $|X|>\alpha|T|$, giving us a polynomially large expectation, it is still possible (from a purely graph-theoretic perspective) that for some set $X'$ s.t.\ $S_{\max}\subseteq X'\subset X$, the induced subtemplate $H_{X',T'}$ (where $T'=T\cap E(H|_{X'\cup S_{\max}})$) does not satisfy this, but rather has $|X'|<\alpha|T'|$.  In this case, rather than concentration, we would have a case where with high probability there is no copy of $H_{X,T}$ (since in particular w.h.p.\ there is no copy of $H_{X',T'}$). This does not seem bad for us, since there are even far fewer copies than expected, and we would like to get an upper bound. However, the flip side of this situation is that with $n^{-\Omega(1)}$ probability, there are actually polynomially many times \emph{more} copies of $H_{X,T}$ than the expectation, and a polynomially small probability is not good enough, since our bound must hold simultaneously for all $n^{\Omega(r(n))}$ small sets $S$.

Fortunately, the $\alpha$-balanced condition precisely states that for any subtemplate $H_{X',T'}$ we in fact have $|X'|\geqs\alpha|T'|$, so this situation will not arise. Thus concentration is not ruled out, and in fact can be formally proven using the Kim-Vu bound, as we will now show.

\subsubsection{Concentration of Low Degree Multilinear Polynomials.}\label{sec:concentration-formal}

Recall that we would like to bound~\eqref{eq:size-bound-3}. For each $e \in \binom{V}{c}$, let $X_e$ denote the indicator variable $\ind[e \in E_G]$ for $G=\cG_c(n,n^{-\alpha})$. Then~\eqref{eq:size-bound-3} can easily be bounded by a multilinear polynomial in these random variables $X_e$'s as follows:
\begin{align}
\sum_{\substack{S^* \in \binom{V}{\leqs \tau'(n)}\\ G|_{S^*} \text{ satisfies } (*)}} \frac{n^{(1-\beta)\talpha|E(S^*) \setminus E(S_{\max})|}}{n^{(1-\beta)|S^* \setminus S_{\max}|}}
&\leqs \sum_{S^* \in \binom{V}{\leqs \tau'(n)}} \frac{1}{n^{(1-\beta)|S^* \setminus S_{\max}|}} \cdot 
\left(\sum_{\substack{T \subseteq \binom{S^*}{c} \setminus \binom{S_{\max}}{c} \\ 
(S^*,T)\text{ satisfies (*)}
}}
\ind[T \subseteq E_G] \cdot n^{(1-\beta)\talpha|T|}\right)\nonumber\\
&= \sum_{\substack{T \subseteq \binom{S^*}{c} \setminus \binom{S_{\max}}{c} \\ 
(S^*,T)\text{ satisfies (*)}
}} \frac{n^{(1 - \beta)\talpha|T|}}{n^{(1 - \beta)|v(T) \setminus S_{\max}|}} \prod_{e \in T} X_e\label{eq:F-def}
\end{align}
Denote the polynomial in~\eqref{eq:F-def} by $F((X_e)_{e \in \binom{V}{c}})$. Note that $F$ has degree $t \leqs \tau'(n)/\talpha$ since $|v(T) \cup S_{\max}| \leqs \tau'(n)$ and $T$ being strictly balanced implies that $|T| \leqs |v(T) \setminus S_{\max}| / \talpha \leqs \tau'(n) / \talpha$.

To show that this quantity is not too large, we will resort to the following concentration bound of low degree multilinear polynomial of Kim and Vu~\cite{KV00}, which applies for any low degree polynomials whose \emph{expectation} of every partial derivative is small. We note that this condition is weaker than those of some other similar inequalities, such as Azuma's~\cite{AZ}, which requires the \emph{maximum} (of partial derivative or some other measures of effect) to be small.


\begin{theorem}[Kim-Vu Concentration Bound~\cite{KV00}] \label{thm:KV}
Let $f(x_1, \dots, x_N)$ be any degree-$t$ multilinear polynomial, i.e.,
\begin{align*}
f(x_1, \dots, x_N) = \sum_{T \in \cT} w_T \prod_{i \in T} x_i
\end{align*}
where $\cT$ is a collection of subsets of $[N]$ of size at most $t$ and $w_T \geqs 0$ is a non-negative weight of each $T \in \cT$.

For each $A \subseteq [N]$, let the truncated polynomial $f_A(x_1, \dots, x_N)$ be defined as
\begin{align*}
f_A(x_1, \dots, x_N) = \sum_{\substack{T \in \cT \\ A \subseteq T}} w_T \prod_{i \in T \setminus A} x_i.
\end{align*}
(In other words, for every monomial containing $A$, we substitute $x_i = 1$ for all $i \in A$. And other monomials are deleted entirely from the sum. Note that $f_A$ does not depend on $x_i$ for any $i \in A$.)

For any independent Bernoulli random variables $X_1, \dots, X_N$, let $E_A$ denote $\Ex[f_A(X_1, \dots, X_N)]$ for every $A \subseteq [N]$. Moreover, let $E$ denote $\max_{A \subseteq [N]} E_A$. Then, for any $\lambda > 1$, we have
\begin{align*}
\Pr[|f(X_1, \dots, X_N) - E_{\emptyset}| > E (8\lambda)^t \sqrt{t!}] < 2e^2 e^{-\lambda} n^{t - 1}.
\end{align*}
\end{theorem}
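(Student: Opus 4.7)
The plan is to prove this by induction on the degree $t$. The base case $t=1$ is essentially Chernoff--Hoeffding: $f$ is then a nonnegative linear combination of independent Bernoullis with $\Ex[f] = E_\emptyset$, and the claimed bound $E(8\lambda)\sqrt{1!}$ is much weaker than what the standard linear tail inequality gives, so this case is immediate.

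For the inductive step, I would expose $X_1,\ldots,X_N$ one variable at a time and form the Doob martingale $M_i = \Ex[f \mid X_1,\ldots,X_i]$ with increments $D_i = M_i - M_{i-1}$. A short calculation shows that $D_i$ can be controlled in terms of the partial derivative $\partial_i f$, which is itself a degree-$(t-1)$ multilinear polynomial with nonnegative coefficients, and whose overall expectation is $E_{\{i\}} \leqs E$. The naive route of applying Azuma here would require a worst-case bound on $|D_i|$, potentially as large as $\max \partial_i f$, which would ruin the dependence on the \emph{expectations} $E_A$ alone. The Kim--Vu idea is to instead invoke the inductive hypothesis on each $\partial_i f$, to argue that with high probability it concentrates around $E_{\{i\}}$, and then apply a Bernstein- or Freedman-type inequality on the ``well-behaved'' martingale obtained by conditioning on the good event.

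Iterating this reasoning down through the $t$ levels of partial derivatives, the factor $\sqrt{t!}$ emerges from bounding the typical conditional variance at each level, while the $(8\lambda)^t$ accumulates as a constant-factor slack per level of the induction. The failure probability at each of the $t$ recursion levels costs a union bound over the at most $N$ variables being conditioned on, combined with the $e^{-\lambda}$ tail of the Bernstein step, which after $t$ levels yields the stated polynomial prefactor $n^{t-1}$ in front of $e^{-\lambda}$.

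The main obstacle---and what distinguishes Kim--Vu from classical martingale inequalities---is precisely this delicate interplay between the inductive hypothesis and the union bounds. The hypothesis on the $\partial_i f$'s must be strong enough to reduce worst-case increment bounds to expectation-based ones, yet the failure events accumulated across all $t$ recursion levels must together remain only polynomial in $N$. Tracking the constants so that each level of the recursion multiplies the permitted deviation by exactly $8\lambda$ and the failure probability by at most one factor of $n$, landing at the clean form $E(8\lambda)^t\sqrt{t!}$ versus $2e^2 e^{-\lambda}n^{t-1}$, is where the core technical work lies; everything else is essentially martingale and Chernoff bookkeeping.
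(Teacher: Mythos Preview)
The paper does not prove this theorem at all: Theorem~\ref{thm:KV} is simply quoted from Kim and Vu~\cite{KV00} and used as a black box in Section~\ref{sec:concentration-formal}. There is nothing in the paper to compare your proposal against.

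That said, your sketch is a faithful high-level outline of the actual Kim--Vu argument: induction on the degree $t$, a Doob martingale on exposing the $X_i$'s one at a time, control of the increments via the partial derivatives $\partial_i f$ (which are degree-$(t-1)$ polynomials to which the inductive hypothesis applies), and a careful union bound that accounts for the $n^{t-1}$ prefactor. One small caution: the base case and the ``immediate'' Chernoff step hide a subtlety, since the $X_i$ are Bernoulli but the coefficients $w_T$ can be arbitrary nonnegative reals, so one really needs a Bernstein-type bound (variance-based) rather than plain Hoeffding to get the right dependence on $E$ rather than on the $\ell_\infty$ norm of the weights. Your proposal already gestures at this (``Bernstein- or Freedman-type inequality''), so the idea is there, but making the constants land at exactly $(8\lambda)^t\sqrt{t!}$ and $2e^2 e^{-\lambda}n^{t-1}$ is genuinely fiddly and is the bulk of the work in~\cite{KV00}.
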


To apply the Kim-Vu bound, we only need to show that $E$ is also $n^{\beta + o(1)}$ as the term that is multiplied with $E$ is negligible. The bound on $E$ is stated below; we remark that both the $\talpha$-strictly balancedness and the constraint $\talpha|T| - |v(T) \setminus S_{\max}| \geqs - 1$ are needed. The latter is needed even when bounding the expectation of $F$ (i.e. $E_{\emptyset}$). On the other hand, roughly speaking, the former is used to ensure that the set $T'$ that we take the partial derivative on is not too dense; otherwise, it could increase the expectation significantly. This intuition is formalized in the following proof.

\begin{lemma}
For $F((X_e)_{e \in \binom{V}{c}})$ as defined above and for any (possibility empty) subset $T' \subseteq \binom{V}{c}$, we have $E_{T'} \leqs \left(\tau'(n)^{c\tau'(n)/\talpha} \tau'(n)^2 / \talpha \right) k$. In particular, this also implies that $E \leqs \left(\tau'(n)^{c\tau'(n)/\talpha} \tau'(n)^2 / \talpha \right) k$.
\end{lemma}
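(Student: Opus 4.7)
The plan is to expand $E_{T'}$ by linearity of expectation, group valid $T \supseteq T'$ by the pair $(s, x) := (|T|, |v(T) \setminus S_{\max}|)$, and use that the $\talpha$-strictly-balanced condition built into $(*)$ is inherited by every subset of $T$ --- in particular by $T'$ itself. Concretely, if $T' \cap \binom{S_{\max}}{c} \neq \emptyset$ or no $T \supseteq T'$ satisfies $(*)$, then $E_{T'} = 0$ and the claim is trivial; otherwise, picking any such $T$ and restricting the s.b.\ property to $T' \subseteq T$ yields the key inequality $x' > \talpha s'$, where $(s', x') = (|T'|, |v(T') \setminus S_{\max}|)$. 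Since each $X_e$ is an independent Bernoulli with mean $n^{-\alpha}$,
\begin{align*}
E_{T'} = \sum_{T \supseteq T',\, (*)\text{ holds}} n^{(1-\beta)(\talpha s - x) - \alpha(s - s')}.
\end{align*}

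Next I would count: given $(s, x)$, the $x - x'$ vertices of $v(T) \setminus (v(T') \cup S_{\max})$ can be picked in at most $n^{x - x'}$ ways, and then the $s - s'$ new hyperedges (each a $c$-subset of $v(T) \cup S_{\max}$, a set of size at most $\tau'(n)$) in at most $\tau'(n)^{c(s-s')}$ ways. After multiplying and rearranging, each $(s, x)$-contribution is at most $\tau'(n)^{c(s-s')} \cdot n^{\beta(x - \talpha s) + (\alpha s' - x') - (\alpha - \talpha) s}$. To bound the exponent by $\beta$, I would apply two inequalities: first, $x \leqs \talpha s + 1$ (the third clause of $(*)$) gives $\beta(x - \talpha s) \leqs \beta$; second, the inherited $x' > \talpha s'$ gives $\alpha s' - x' < (\alpha - \talpha) s'$, which combined with $s \geqs s'$ yields $(\alpha s' - x') - (\alpha - \talpha) s < 0$.

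It remains to sum the counting factor. The relation $\talpha s \leqs x + 1 \leqs \tau'(n) + 1$ bounds $s \leqs (\tau'(n) + 1)/\talpha$, and $x \leqs \tau'(n)$, so the number of $(s, x)$ pairs is $O(\tau'(n)^2/\talpha)$, and $\tau'(n)^{c(s-s')}$ is at most $\tau'(n)^{c\tau'(n)/\talpha}$ up to the subpolynomial factor $\tau'(n)^{c/\talpha}$ that can be absorbed into $\tau'(n)^{c\tau'(n)/\talpha}$ for sufficiently large $n$. Taking $k = n^\beta$ then yields $E_{T'} \leqs (\tau'(n)^{c\tau'(n)/\talpha} \cdot \tau'(n)^2 / \talpha) \, k$, and maximizing over $T'$ gives the same bound on $E$.

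The main technical obstacle is the exponent reduction to $\beta$. Without the inherited inequality $x' > \talpha s'$, the term $\alpha s'$ coming from the $n^{\alpha|T'|}$ in the formula $n^{-\alpha(s - s')}$ could be polynomially large and would swamp $n^\beta$; note that $s'$ is not a priori bounded by any polylog quantity, since $T'$ is an \emph{arbitrary} subset on which we take the partial derivative. The $\talpha$-strict-balancedness built into $(*)$ is precisely the structural property that lets $\alpha s' - x'$ be absorbed into the negative ``slack'' $-(\alpha - \talpha) s$ left over after $(1-\beta)|X_T|$ cancels $(1-\beta)\talpha|T|$ on the right-hand side. This formalizes the heuristic given just before Section~\ref{sec:concentration-formal} for why strict balancedness is exactly the right condition to ensure concentration.
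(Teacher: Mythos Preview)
Your proof is correct and follows essentially the same approach as the paper: both group terms by the number of new vertices and new hyperedges beyond $T'$, invoke the inherited strict balancedness of $T'$ to kill the dangerous $n^{\alpha s'}$ contribution, and use the third clause of $(*)$ to cap the $\beta$-part of the exponent by $\beta$. One small remark: your inequality $\talpha s \leqs x + 1$ in the last paragraph actually comes from strict balancedness applied to $T$ itself (which gives the sharper $\talpha s < x$ and hence $s - s' < \tau'(n)/\talpha$ directly, with no leftover $\tau'(n)^{c/\talpha}$ factor to absorb), not from the third clause of $(*)$, which goes in the opposite direction.
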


\begin{proof}
Observe that $F_{T'}((X_e)_{e \in \binom{V}{c}})$ is exactly 
\begin{align*}
\sum_{\substack{T' \subseteq T \subseteq \binom{V}{c} \setminus \binom{S_{\max}}{c} \\ 
(S^*,T)\text{ satisfies (*)}
}}
\frac{n^{(1 - \beta)\talpha|T|}}{n^{(1 - \beta)|v(T) \setminus S_{\max}|}} \prod_{e \in T \setminus T'} X_e.
\end{align*}
Hence, if $|v(T') \cup S_{\max}| \geqs \tau'(n)$, then this is simply zero. Furthermore, since the sum is only over $T$ that is $\talpha$-strictly balanced, the expression is also zero unless $|v(T') \setminus S_{\max}| > \talpha|T'|$. As a result, from now on, we may assume that $|v(T') \setminus S_{\max}| > \talpha|T'|$. For brevity, let $y = |T'|$ and $z = |v(T') \setminus S_{\max}|$.

Now, observe that $\Ex[\prod_{e \in T \setminus T'} X_e]$ is simply $n^{-\alpha|T \setminus T'|}$. In other words, we have
\begin{align*}
E_{T'} &= \sum_{\substack{T' \subseteq T \subseteq \binom{V}{c} \setminus \binom{S_{\max}}{c} \\ 
(S^*,T)\text{ satisfies (*)}
}}
\frac{n^{(1 - \beta)\talpha|T|}}{n^{(1 - \beta)|v(T) \setminus S_{\max}| + \alpha|T\setminus T'|}} \\
&\leqs \sum_{\substack{a, b \in \mathbb{N}_0 \\ b + z + |S_{\max}| \leqs \tau'(n) \\ a \leqs \tau'(n) / \talpha \\ \talpha(a + y) - (b + z) \geqs -1}} \sum_{\substack{T' \subseteq T \subseteq \binom{V}{c} \setminus \binom{S_{\max}}{c} \\ |v(T) \setminus (v(T') \cup S_{\max})| = b \\ |T| = a + y}} \frac{n^{(1 - \beta)\talpha(a + y)}}{n^{(1 - \beta)(b + z) + \alpha a}} \\
&\leqs \sum_{\substack{a, b \in \mathbb{N}_0 \\ b + z + |S_{\max}| \leqs \tau'(n) \\ a \leqs \tau'(n) / \talpha\\ \talpha(a + y) - (b + z) \geqs -1}} n^b \cdot \binom{\tau'(n)^c}{a} \cdot \frac{n^{(1 - \beta)\talpha(a + y)}}{n^{(1 - \beta)(b + z) + \alpha a}} \\
&\leqs \tau'(n)^{c\tau'(n)/\talpha} \cdot \sum_{\substack{a, b \in \mathbb{N}_0 \\ b + z + |S_{\max}| \leqs \tau'(n) \\ a \leqs \tau'(n) / \talpha \\ \talpha(a + y) - (b + z) \geqs -1}} n^b \cdot \frac{n^{(1 - \beta)\talpha(a + y)}}{n^{(1 - \beta)(b + z) + \talpha a}} \\
&= \tau'(n)^{c\tau'(n)/\talpha} \cdot \sum_{\substack{a, b \in \mathbb{N}_0 \\ b + z + |S_{\max}| \leqs \tau'(n) \\ a \leqs \tau'(n) / \talpha \\ \talpha(a + y) - (b + z) \geqs -1}} n^{\beta((b + z) - \talpha(a + y))} \cdot n^{\talpha y - z} \\
&< \tau'(n)^{c\tau'(n)/\talpha} \cdot \sum_{\substack{a, b \in \mathbb{N}_0 \\ b + z + |S_{\max}| \leqs \tau'(n) \\ a \leqs \tau'(n) / \talpha \\ \talpha(a + y) - (b + z) \geqs -1}} n^\beta \cdot n^0 \\
&\leqs \left(\tau'(n)^{c\tau'(n)/\talpha} \tau'(n)^2 / \talpha \right) k.
\end{align*}
\end{proof}

\subsubsection{Putting Things Together}

Note that for sufficiently large $n$, we have $\tau'(n) \geqs 1/\talpha$. Applying Theorem~\ref{thm:KV} with the above bound and with $\lambda = (\log n)^{10}$ and $t = \tau'(n) / \talpha$, we arrive at the following bound:
\begin{align*}
\Pr\left[F((X_e)_{e \in \binom{V}{2}}) > \left(2(8\log n)^{10 \tau'(n)/\talpha}(\tau'(n))^{3 + (c + 1)\tau'(n)/\talpha}\right) \cdot k\right] \leqs O(e^{-\log^{10} n + (\tau'(n)/\talpha)\log n})
\end{align*}

Together with (\ref{eq:size-bound-1-gen}) and (\ref{eq:size-bound-2}), the above equation implies
\begin{align*}
\Pr\left[\frac{\sum_{i \in V} y_{S \cup \{i\}}}{y_S} > \left(2 + 2(8\log n)^{10 \tau'(n)/\talpha}(\tau'(n))^{3 + (c + 1)\tau'(n)/\talpha}\right) \cdot k \right] \leqs O(e^{-\log^{10} n + (\tau'(n)/\talpha)\log n}).
\end{align*}

Hence, by taking union bound over all $S$ of size at most $r(n) \leqs \tau'(n)$ we have
\begin{align*}
\Pr\left[\exists S, \frac{\sum_{i \in V} y_{S \cup \{i\}}}{y_S} > \left(2 + 2(8\log n)^{10 \tau'(n)/\talpha}(\tau'(n))^{3 + (c + 1)\tau'(n)/\talpha}\right) \cdot k \right] \leqs O(e^{-\log^{10} n + (2\tau'(n)/\talpha)\log n}).
\end{align*}

Finally, observe that in our parameter selection $c \tau'(n) = o(\log n / \log \log n)$; hence, the term $\left(2 + 2(8\log n)^{10 \tau'(n)/\talpha}(\tau'(n))^{3 + (c + 1)\tau'(n)/\talpha}\right)$ is $n^{o(1)}$ as desired, and $O(e^{-\log^{10} n + (2\tau'(n)/\talpha)\log n}) = e^{-\log^{10} n + o(\log^2 n)} = o(1)$. In other words, have shown that the size constraint holds for all $S \subseteq V$ of size at most $r(n)$ with high probability, which concludes our proof.


\section{The Integrality Gaps}
\label{sec:int-gap}

With Theorem~\ref{thm:main-general} ready, it is simple to arrive at the integrality gaps; since the solution from Theorem~\ref{thm:main-general} is already mimicking a graph with planted $\cG_c(n^{\beta+o(1)}, n^{-\beta\alpha-o(1)})$, we can just compare the optimal LP value there with the actual solution in $G \sim \cG_c(n, n^{-\alpha})$.

\subsection{Densest $k$-Subhypergraph}
\label{sec:int-gap-dksh}

First, recall that the standard LP relaxation for Densest $k$-Subhypergraph can be stated as the LP on the left hand side below; here $y_i$ (respectively, $y_e$) is intended to denote whether the vertex $i$ (respectively, hyperedge $e$) belongs to the selected subhypergraph.

\begin{center}
\begin{tabular}{c c c | c c c}
maximize & $\sum_{e\in E} y_e$ & &  maximize & $p$ & \\
subject to & $\sum_{i \in V} y_i\leqs k$ & & subject to & $\sum_{i \in V} y_i\leqs k$ & \\
& & & & $\sum_{e \in E} y_e \geqs p$ \\
& $y_e \leqs y_i$ & $\forall e \in E, i \in e$ & & $y_e \leqs y_i$ & $\forall e \in E, i \in e$ \\
 & $0 \leqs y_i \leqs 1$ & $\forall i \in V$ & & $0 \leqs y_i \leqs 1$ & $\forall i \in V$
\end{tabular}
\end{center}

The right hand side LP is an equivalent formulation of the left hand side LP where we also added the density constraint that the total number of hyperedges must be at least $p$, and then put $p$ as an objective. Note that here we think of $p$ as a fixed number rather than a variable in this LP. (The program can be solved by a binary search on $p$.) Lifting this LP using the $r$-level Sherali-Adams hierarchy yields the following relaxation for Densest $k$-Subhypergraph:
\begin{align}
\text{maximize } & p \nonumber \\
\text{subject to } \nonumber \\
\forall S, T \subseteq V \text{ such that } |S| + |T| \leqs r: \nonumber \\
\sum_{i \in V} \sum_{J \subseteq T} (-1)^{|J|} y_{S \cup J \cup \{i\}} &\leqs k \sum_{J \subseteq T} (-1)^{|J|}y_{S \cup J} \nonumber \\
\sum_{e \in E} \sum_{J \subseteq T} (-1)^{|J|} y_{S \cup J \cup e} &\geqs p \sum_{J \subseteq T} (-1)^{|J|} y_{S \cup J} \label{eq:lifted-density-bound} \\
0 &\leqs \sum_{J \subseteq T} (-1)^{|J|} y_{S \cup J} \leqs 1 \label{eq:lifted01} \\
y_{\emptyset} &= 1. \nonumber
\end{align}

\begin{remark} Lifting the objective function here is crucial. Without constraint~\eqref{eq:lifted-density-bound}, nothing prevents us from setting $y_S=0$ for all $|S|\geqs 3$ and obtaining a much larger integrality gap for many more rounds. LP-based algorithms for D$k$S also use a locally lifted objective function. Indeed, in the previous work of Bhaskara~\etal~\cite{BCVGZ12} a similar, but slightly different, constraint is lifted. Specifically, they lift the ``degree constraint'' that every vertex in the selected subgraph has degree at least $\Omega(p/k)$. While our density constraint does not imply the degree constraint, it is easy to see the degree constraint is also satisfied in our solution (via essentially the same proof as that of the density constraint).
\end{remark}

\begin{remark}
The constraint $y_e \leqs y_i$ does not appear in the lifted LP since it is already implied by~\eqref{eq:lifted01}.
\end{remark}

Our solution from Theorem~\ref{thm:main-general} already satisfies the constraints in the above LP. Indeed, by selecting appropriate parameters, we immediately arrive at the claim integrality gaps for D$k$SH.

\begin{proof}[Proof of Corollary~\ref{col:gap-dksh}]
We select $\beta = \alpha/(c - 1)$. From Theorem~\ref{thm:main-general}, there exists a solution for $\tilde\Omega(\log n)$-level Sherali-Adams LP above with $p = n^{\beta(c - \alpha) - o(1)}$ and $k = n^{\beta + o(1)}$. On the other hand, it is easy to see that, w.h.p., any $k$-vertex subhypergraph contains at most $\tilde{O}(\max\{k, k^cn^{-\alpha}\}) = \tilde{O}(n^{\beta + o(1)})$ hyperedges. Hence, this is an integrality gap of $n^{\beta(c - 1 - \alpha) - o(1)} = n^{\alpha(1 - \alpha/(c - 1)) - o(1)}$ as desired.
\end{proof}

\subsection{Minimum $p$-Union}

The standard relaxation for Minimum $p$-Union is the left hand side LP below. Again, we will work with the equivalent LP on the right, which yields stronger lifts in the Sherali-Adams hierarchy.

\begin{center}
\begin{tabular}{c c c | c c c}
minimize & $\sum_{i \in V} y_i$ & &  minimize & $k$ & \\
subject to & & & subject to & $\sum_{i \in V} y_i\leqs k$ & \\
& $\sum_{e \in E} y_e \geqs p$ & & & $\sum_{e \in E} y_e \geqs p$ \\
& $y_e \leqs y_i$ & $\forall e \in E, i \in e$ & & $y_e \leqs y_i$ & $\forall e \in E, i \in e$ \\
 & $0 \leqs y_i \leqs 1$ & $\forall i \in V$ & & $0 \leqs y_i \leqs 1$ & $\forall i \in V$
\end{tabular}
\end{center}

The $r$-round Sherali-Adams lift of the right hand side LP is exactly the same as that of D$k$SH:
\begin{align*}
\text{minimize } & k \\
\text{subject to } \\
\forall S, T \subseteq V \text{ such that } |S| + |T| \leqs r: \\
 \sum_{i \in V} \sum_{J \subseteq T} (-1)^{|J|} y_{S \cup J \cup \{i\}} &\leqs k \sum_{J \subseteq T} (-1)^{|J|}y_{S \cup J} \\
\sum_{e \in E} \sum_{J \subseteq T} (-1)^{|J|} y_{S \cup J \cup e} &\geqs p \sum_{J \subseteq T} (-1)^{|J|} y_{S \cup J} \\
0 &\leqs \sum_{J \subseteq T} (-1)^{|J|} y_{S \cup J} \leqs 1 \\
y_{\emptyset} &= 1.
\end{align*}

Again, we arrive at integrality gaps by plugging appropriate parameters into Theorem~\ref{thm:main-general}.

\begin{proof}[Proof of Corollary~\ref{col:gap-mpu}]
We select $\beta = \frac{\alpha}{(c - 1)(c - \alpha)}$. From Theorem~\ref{thm:main-general}, there exists a solution for $\tilde\Omega(\log n)$-level Sherali-Adams LP above with $p = n^{\beta(c - \alpha) - o(1)}$ and $k = n^{\beta + o(1)}$. On the other hand, w.h.p., any subhypergraph containing $p$ hyperedges must consist of at least $\tilde{\Omega}(\min\{p, (pn^{\alpha})^{1/c}\}) \geqs n^{\beta(c - \alpha) - o(1)}$ vertices. Hence, this yields an integrality gap of $n^{\beta(c - 1 - \alpha) - o(1)} = n^{\frac{\alpha(c - 1 - \alpha)}{(c - 1)(c - \alpha)} - o(1)}$.
\end{proof}

\subsection{Small Set Bipartite Vertex Expansion}

The standard LP relaxation for SSBVE is the left hand side LP below. Again, we work with its equivalence on the right, which yields stronger lifts in the Sherali-Adams hierarchy.

\begin{center}
\begin{tabular}{c c c | c c c}
minimize & $\sum_{i \in R} y_i$ & &  minimize & $k$ & \\
subject to & & & subject to & $\sum_{i \in R} y_i\leqs k$ & \\
& $\sum_{j \in L} y_j \geqs p$ & & & $\sum_{j \in L} y_j \geqs p$ \\
& $y_i \geqs y_j$ & $\forall (j, i) \in E$ & & $y_i \geqs y_j$ & $\forall (j, i) \in E$ \\
 & $1 \geqs y_i \geqs 0$ & $\forall i \in V$ & & $1 \geqs y_i \geqs 0$ & $\forall i \in V$
\end{tabular}
\end{center}

The $r$-round Sherali-Adams lift of the right hand side LP above can be stated as follows.
\begin{align*}
\text{minimize } & k \\
\text{subject to } \\
\forall S, T \subseteq L \cup R \text{ such that } |S| + |T| \leqs r: \\
 \sum_{i \in R} \sum_{J \subseteq T} (-1)^{|J|} y_{S \cup J \cup \{i\}} &\leqs k \sum_{J \subseteq T} (-1)^{|J|}y_{S \cup J}\\
\sum_{j \in L} \sum_{J \subseteq T} (-1)^{|J|} y_{S \cup J \cup \{j\}} &\geqs p \sum_{J \subseteq T} (-1)^{|J|} y_{S \cup J} \\
 \sum_{J \subseteq T} (-1)^{|J|} y_{S \cup J \cup \{j\}} &\leqs \sum_{J \subseteq T} (-1)^{|J|} y_{S \cup J \cup \{i\}}  &\forall (j, i) \in E\\
0 &\leqs \sum_{J \subseteq T} (-1)^{|J|} y_{S \cup J} \leqs 1 \\
y_{\emptyset} &= 1.
\end{align*}

\begin{proof}[Proof of Corollary~\ref{col:gap-ssbve}]
Observe that if we view a $c$-uniform hypergraph as the incidence bipartite graph where the hyperedge are vertices on the left and the original vertices are vertices on the right. Then, a solution to $\tilde{\Omega}(\log n)$-round of Minimum $p$-Union immediately gives a solution to SSBVE with the same number of rounds and the same values of $p$ and $k$.

Hence, we can use the same analysis as in the proof of Corollary~\ref{col:gap-mpu}, but this time we pick $\beta$ to be slightly higher than $\gamma$, i.e., $\beta = \gamma + \frac{1}{\log \log n}$, $c = \log \log n$ and $\alpha = \frac{c(c - 1)\beta}{1 + (c - 1)\beta}$. The solution satisfies the constraint for $p = n^{\beta(c - \alpha) - o(1)} \geqs n^{\gamma(c - \alpha) + o(1)} \geqs |L|^\gamma$ where the last inequality comes from the fact that $L$ concentrates around $n^{c - \alpha}$. Finally, with a similar calculation as in the previous section, the integrality gap is $n^{\beta(c - 1 -\alpha) - o(1)} = |L|^{\beta(1 - \frac{1}{c - \alpha}) - o(1)}$. Observe that $\frac{1}{c - \alpha} = \frac{1 + (c - 1)\beta}{c} \leqs \beta + o(1)$. Thus, the integrality gap is at least $|L|^{\beta(1 - \beta) - o(1)} = |L|^{\gamma(1 - \gamma) - o(1)}$ as claimed.
\end{proof}



\section{Discussions and Open Questions}
\label{sec:open}
Having shown a general technique for proving Sherali-Adams gaps at the log-density threshold, it would now be interesting to see if our approach can be applied to more rigidly structured problems that have been considered in the log-density framework, such as Label Cover.

Furthermore, an exciting challenge that could further bolster the conjectured hardness would be to give matching lower-bounds in the Sum-of-Squares SDP hierarchy. However, note that for some parameter regimes, such an integrality gap does not hold even for a simple SDP! As shown in~\cite{BCCFV10}, for $1>\alpha>1/2$, a simple SDP relaxation for D$k$S applied to $G=\cG(n,n^{-\alpha})$ can already witness the fact that $G$ does not contain a $k=n^{\alpha}$-subgraph with more than $n^{(1+\alpha)/2}\ll k^{2-\alpha}$ edges. On the other hand, in the worst case, the best currently-known approximation for this value of $k$ is still $k^{1-\alpha}$ (matching the log-density threshold). Thus, at least for certain parameter regimes, it would be very interesting to prove either of the following two possibilities, both of which are consistent with the current state of our knowledge:
\begin{enumerate}
  \item There are algorithms for certain parameter regimes which give (worst-case) approximation guarantees strictly better than the log-density gap.
  \item There is some other family of instances (other than Erd\H{o}s R\'enyi (hyper)graphs) for which there is a hardness of approximation, or at least Sum-of-Squares integrality gap, matching the log-density threshold.
\end{enumerate}

\subsubsection*{Acknowledgments}

PM thanks Tselil Schramm and Euiwoong Lee for insightful discussions on related questions.

\bibliography{main}

\newcommand{\etalchar}[1]{$^{#1}$}
\begin{thebibliography}{CMMV17}

\bibitem[AAM{\etalchar{+}}11]{AAMMW11}
Noga Alon, Sanjeev Arora, Rajsekar Manokaran, Dana Moshkovitz, and Omri
  Weinstein.
\newblock Inapproximabilty of densest $k$-subgraph from average case hardness.
\newblock Unpublished Manuscript, 2011.

\bibitem[ABBG11]{ABBG11}
Sanjeev Arora, Boaz Barak, Markus Brunnermeier, and Rong Ge.
\newblock Computational complexity and information asymmetry in financial
  products.
\newblock {\em Commun. {ACM}}, 54(5):101--107, 2011.

\bibitem[ABW10]{ABW10}
Benny Applebaum, Boaz Barak, and Avi Wigderson.
\newblock Public-key cryptography from different assumptions.
\newblock In {\em STOC}, pages 171--180, 2010.

\bibitem[ACLR15]{ACLR15}
Pranjal Awasthi, Moses Charikar, Kevin~A. Lai, and Andrej Risteski.
\newblock Label optimal regret bounds for online local learning.
\newblock In {\em COLT}, pages 150--166, 2015.

\bibitem[Azu67]{AZ}
Kazuoki Azuma.
\newblock Weighted sums of certain dependent random variables.
\newblock {\em Tohoku Mathematical Journal, Second Series}, 19(3):357--367,
  1967.

\bibitem[BCC{\etalchar{+}}10]{BCCFV10}
Aditya Bhaskara, Moses Charikar, Eden Chlamtac, Uriel Feige, and Aravindan
  Vijayaraghavan.
\newblock Detecting high log-densities: an {$O(n^{1/4})$} approximation for
  densest $k$-subgraph.
\newblock In {\em STOC}, pages 201--210, 2010.

\bibitem[BCV{\etalchar{+}}12]{BCVGZ12}
Aditya Bhaskara, Moses Charikar, Aravindan Vijayaraghavan, Venkatesan
  Guruswami, and Yuan Zhou.
\newblock Polynomial integrality gaps for strong {SDP} relaxations of densest
  $k$-subgraph.
\newblock In {\em SODA}, pages 388--405, 2012.

\bibitem[BHK{\etalchar{+}}16]{BHKKMP16}
Boaz Barak, Samuel~B. Hopkins, Jonathan Kelner, P.~Kothari, Ankur Moitra, and
  Aaron Potechin.
\newblock A nearly tight sum-of-squares lower bound for the planted clique
  problem.
\newblock In {\em FOCS}, pages 428--437, 2016.

\bibitem[BKRW17]{BKRW17}
Mark Braverman, Young Kun{-}Ko, Aviad Rubinstein, and Omri Weinstein.
\newblock {ETH} hardness for densest-$k$-subgraph with perfect completeness.
\newblock In {\em SODA}, pages 1326--1341, 2017.

\bibitem[CDK12]{CDK12}
Eden Chlamtac, Michael Dinitz, and Robert Krauthgamer.
\newblock Everywhere-sparse spanners via dense subgraphs.
\newblock In {\em FOCS}, pages 758--767, 2012.

\bibitem[CDM17]{CDM17}
Eden Chlamt{\'{a}}c, Michael Dinitz, and Yury Makarychev.
\newblock Minimizing the union: Tight approximations for small set bipartite
  vertex expansion.
\newblock In {\em SODA}, pages 881--899, 2017.

\bibitem[CHK11]{CHK11}
Moses Charikar, MohammadTaghi Hajiaghayi, and Howard~J. Karloff.
\newblock Improved approximation algorithms for label cover problems.
\newblock {\em Algorithmica}, 61(1):190--206, 2011.

\bibitem[CM]{CMinprep}
Eden Chlamt\'a\v{c} and Pasin Manurangsi.
\newblock Tight approximations in the semirandom log-density framework via
  label reduction.
\newblock In preparation.

\bibitem[CMMV17]{CMMV17}
Eden Chlamt{\'{a}}c, Pasin Manurangsi, Dana Moshkovitz, and Aravindan
  Vijayaraghavan.
\newblock Approximation algorithms for label cover and the log-density
  threshold.
\newblock In {\em SODA}, pages 900--919, 2017.

\bibitem[CMVZ16]{CMVZ16}
Julia Chuzhoy, Yury Makarychev, Aravindan Vijayaraghavan, and Yuan Zhou.
\newblock Approximation algorithms and hardness of the \emph{k}-route cut
  problem.
\newblock {\em {ACM} Trans. Algorithms}, 12(1):2:1--2:40, 2016.

\bibitem[CNW16]{CNW16}
Moses Charikar, Yonatan Naamad, and Anthony Wirth.
\newblock On approximating target set selection.
\newblock In {\em APPROX}, pages 4:1--4:16, 2016.

\bibitem[CZ17]{CZ17}
Stephen~R. Chestnut and Rico Zenklusen.
\newblock Hardness and approximation for network flow interdiction.
\newblock {\em Networks}, 69(4):378--387, 2017.

\bibitem[DM15]{DM15}
Yash Deshpande and Andrea Montanari.
\newblock Improved sum-of-squares lower bounds for hidden clique and hidden
  submatrix problems.
\newblock In {\em COLT}, pages 523--562, 2015.

\bibitem[Fei02]{Feige02}
Uriel Feige.
\newblock Relations between average case complexity and approximation
  complexity.
\newblock In {\em STOC}, pages 534--543, 2002.

\bibitem[FH17]{FH17}
Uriel Feige and Yael Hitron.
\newblock The ordered covering problem.
\newblock {\em Algorithmica}, Aug 2017.

\bibitem[Gri01]{Gri01}
Dima Grigoriev.
\newblock Linear lower bound on degrees of positivstellensatz calculus proofs
  for the parity.
\newblock {\em Theor. Comput. Sci.}, 259(1-2):613--622, 2001.

\bibitem[HJ06]{HJ06}
Mohammad~Taghi Hajiaghayi and Kamal Jain.
\newblock The prize-collecting generalized steiner tree problem via a new
  approach of primal-dual schema.
\newblock In {\em SODA}, pages 631--640, 2006.

\bibitem[HJL{\etalchar{+}}06]{HJLMRV06}
Mohammad~Taghi Hajiaghayi, Kamal Jain, Lap~Chi Lau, Ion~I. Mandoiu, Alexander
  Russell, and Vijay~V. Vazirani.
\newblock Minimum multicolored subgraph problem in multiplex {PCR} primer set
  selection and population haplotyping.
\newblock In {\em ICCS}, pages 758--766, 2006.

\bibitem[HKP{\etalchar{+}}16]{HKPRS16}
Samuel~B. Hopkins, Pravesh Kothari, Aaron~Henry Potechin, Prasad Raghavendra,
  and Tselil Schramm.
\newblock On the integrality gap of degree-4 sum of squares for planted clique.
\newblock In {\em SODA}, pages 1079--1095, 2016.

\bibitem[Kho06]{Kho06}
Subhash Khot.
\newblock Ruling out {PTAS} for graph min-bisection, dense $k$-subgraph, and
  bipartite clique.
\newblock {\em {SIAM} J. Comput.}, 36(4):1025--1071, 2006.

\bibitem[KKK18]{KKK18}
Jonah Kallenbach, Robert Kleinberg, and Scott~Duke Kominers.
\newblock Orienteering for electioneering.
\newblock {\em Operations Research Letters}, 46(2):205 -- 210, 2018.

\bibitem[KV00]{KV00}
Jeong~Han Kim and Van~H. Vu.
\newblock Concentration of multivariate polynomials and its applications.
\newblock {\em Combinatorica}, 20(3):417--434, 2000.

\bibitem[Lee17]{Lee17}
Euiwoong Lee.
\newblock Partitioning a graph into small pieces with applications to path
  transversal.
\newblock In {\em SODA}, pages 1546--1558, 2017.

\bibitem[Man17]{M17}
Pasin Manurangsi.
\newblock Almost-polynomial ratio {ETH}-hardness of approximating densest
  $k$-subgraph.
\newblock In {\em STOC}, pages 954--961, 2017.

\bibitem[MPW15]{MPW15}
Raghu Meka, Aaron Potechin, and Avi Wigderson.
\newblock Sum-of-squares lower bounds for planted clique.
\newblock In {\em Proceedings of the Forty-Seventh Annual {ACM} on Symposium on
  Theory of Computing, {STOC} 2015, Portland, OR, USA, June 14-17, 2015}, pages
  87--96, 2015.

\bibitem[RS10]{RS10}
Prasad Raghavendra and David Steurer.
\newblock Graph expansion and the unique games conjecture.
\newblock In {\em STOC}, pages 755--764, 2010.

\bibitem[SA90]{SA90}
Hanif~D. Sherali and Warren~P. Adams.
\newblock A hierarchy of relaxations between the continuous and convex hull
  representations for zero-one programming problems.
\newblock {\em {SIAM} J. Discrete Math.}, 3(3):411--430, 1990.

\bibitem[Sch08]{Sch08}
Grant Schoenebeck.
\newblock Linear level lasserre lower bounds for certain k-csps.
\newblock In {\em FOCS}, pages 593--602, 2008.

\end{thebibliography}
\bibliographystyle{alpha}

\appendix

\section{Pseudo-Calibration for Densest $k$-Subhypergraph}
\label{sec:pseudo-cal}

Let $n, k \in \N$ and $p, q \in [0, 1]$ be parameters to be chosen later.
\begin{itemize}
\item Let $\Grd = \cG_c(n, p)$ denote the distribution of Erdos-Renyi random $c$-uniform hypergraph on $n$ vertices where each $c$-size set of vertices is a hyperedge w.p. $p$.
\item Let $\Gpl$ denote the ``planted'' distribution where each of the $n$ vertices is marked as in the planted solution independently with probability $k/n$. For each $c$-size set of vertices, if all of its elements are marked as in the planted solution, then they are joined by a hyperedge with probability $q$. Otherwise, they are joined with probability $p$.
\end{itemize}

We represent a graph $G$ as $\{\frac{1 - p}{\sqrt{p(1 - p)}}, \frac{-p}{\sqrt{p(1 - p)}}\}^{\binom{[n]}{c}}$ where $G_e = \frac{1 - p}{\sqrt{p(1 - p)}}$ if the hyperedge $e$ exists and in the graph and $G_e = \frac{-p}{\sqrt{p(1 - p)}}$ otherwise. As usual, for each $T \subseteq \binom{[n]}{c}$, let $\chi_T(G) = \prod_{e \in T} G_e$. Observe that
\begin{align*}
\Ex_{G \sim \Grd}[\chi_T(G) \chi_{T'}(G)] = 
\begin{cases}
1 & \text{if } T = T', \\
0 & \text{otherwise.}
\end{cases}
\end{align*}

Finally, we use $x_u \in \{0, 1\}$ to denote the indicator variable of whether $u \in [n]$ is marked as planted. As usual, we define $x_S = \prod_{u \in S} x_u$

The Pseudo-Calibration heuristic suggests the following pseudo-distribution as a candidate solution.
\begin{align*}
\tE[x_S](G) = \sum_{T \subseteq \binom{[n]}{c} \atop |v(T) \cup S| \leqs \tau} \widehat{{\tE[x_S]}}(T) \chi_T(G)
\end{align*} 
where
\begin{align*}
\widehat{{\tE[x_S]}}(T) = \Ex_{(x, G) \sim \Gpl}[x_S\chi_T(G)] 
&= \left(\frac{k}{n}\right)^{|v(T) \cup S|} \Ex_{(x, G) \sim \Gpl}[\chi_T(G) \mid x_{v(T) \cup S} = 1] \\
&= \left(\frac{k}{n}\right)^{|v(T) \cup S|} \left(\frac{q - p}{\sqrt{p(1 - p)}}\right)^{|T|} \\
&\approx \left(\frac{k}{n}\right)^{|v(T) \cup S|} \left(\frac{q}{\sqrt{p}}\right)^{|T|}.
\end{align*}

In other words, the suggested solution is
\begin{align*}
\tE[x_S](G) = \sum_{T \subseteq \binom{[n]}{2} \atop |v(T) \cup S| \leqs \tau} \left(\frac{k}{n}\right)^{|v(T) \cup S|} \left(\frac{q}{\sqrt{p}}\right)^{|T|} \chi_T(G).
\end{align*} 

We will further approximate this by replacing $\chi_T(G)$ with $(1/\sqrt{p})^{|T|}\ind[T \subseteq E_G]$. This gives the solution
\begin{align*}
\tE[x_S](G) = \sum_{T \subseteq E_G \atop |v(T) \cup S| \leqs \tau} \left(\frac{k}{n}\right)^{|v(T) \cup S|} \left(\frac{q}{p}\right)^{|T|}.
\end{align*} 

Our solution (specified in~\eqref{eq:sol-def-gen}) is the same as the above expression except that we change the summation to maximization, and we add the dampening factor and some $o(1)$ slack in the exponent; note that the parameters there are $p = n^{\alpha}, k = n^\beta, q = k^{\alpha} = n^{\alpha\beta}$.

\end{document}